\providecommand{\ltlN}{\operatorname{%
		\tikz[baseline]{
			\draw[line width=.12ex]
			(0,.6ex) circle (.8ex);
}}}{}
\providecommand{\ltlF}{\operatorname{%
		\tikz[baseline]{
			\draw[line width=.12ex,line join=round]
			(0ex,.6ex) -- (.95ex,1.55ex) -- (1.9ex,.6ex) -- (.95ex,-.35ex) -- cycle;
}}}{}
\providecommand{\ltlG}{\operatorname{%
		\tikz[baseline]{
			\draw[line width=.12ex,line join=round]
			(0ex,-.2ex) -- (0ex,1.3ex) -- (1.5ex,1.3ex) -- (1.5ex,.-.2ex) -- cycle;
}}}{}
\DeclareMathOperator{\ltlU}{\mathcal{U}}
\DeclareMathOperator{\ltlW}{\mathcal{W}}
\newcommand{\ap}{\mathit{AP}}
\newcommand{\pathVars}{\mathcal{V}}
\newcommand{\traceset}{\mathbb{T}}
\newcommand{\ldot}{\mathpunct{.}}
\newcommand{\quant}{\mathds{Q}}
\newcommand{\nat}{\mathbb{N}}
\newcommand{\intSet}{\mathbb{Z}}
\newcommand{\calA}{\mathcal{A}}
\newcommand{\calS}{\mathcal{S}}
\newcommand{\calI}{\mathcal{I}}
\newcommand{\frakF}{\mathfrak{F}}
\newcommand{\frakP}{\mathfrak{P}}
\newcommand{\tool}{\texttt{FOLHyper}}
\newcommand{\traceSort}{\texttt{Trace}}
\newcommand{\timeSort}{\texttt{Time}}
\newcommand{\statePred}[1]{\mathit{at}_{#1}}
\newcommand{\eahyper}{\texttt{EAHyper}}
\newcommand{\mghyper}{\texttt{MGHyper}}
\newcommand{\lmhyper}{\texttt{LMHyper}}
\newcommand\xqed[1]{%
	\leavevmode\unskip\penalty9999 \hbox{}\nobreak\hfill
	\quad\hbox{#1}}
\newcommand\demo{\xqed{$\triangle$}}
\definecolor{sat}{HTML}{117733}
\definecolor{unsat}{HTML}{882255}
\newcommand{\artifactvposition}{16.5cm}
\renewcommand{\artifactvposition}{22cm}
		\href{\artifacturl}{\includegraphics[width=1.7cm]{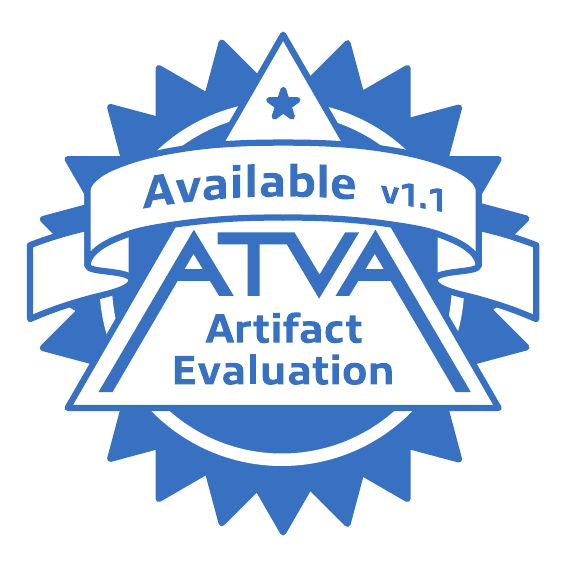}}
\newif\iffullversion
\newcommand{\ifFull}[2]{\iffullversion#1\else#2\fi}
\begin{document}
\title{Checking Satisfiability of Hyperproperties \\using First-Order Logic}

\author{Raven Beutner$^{\text{(\Letter)}}$\orcidlink{0000-0001-6234-5651} \and Bernd Finkbeiner\orcidlink{0000-0002-4280-8441}}
\authorrunning{R.~Beutner and B.~Finkbeiner}
%
\institute{CISPA Helmholtz Center for Information Security, Saarbrücken, Germany\\
	\email{\{raven.beutner,finkbeiner\}@cispa.de}
} 

\maketitle              
\begin{abstract}
Hyperproperties are system properties that relate multiple execution traces and occur, e.g.,  when specifying security and information-flow properties. 
Checking if a hyperproperty is satisfiable has many important applications, such as testing if some security property is contradictory, or analyzing implications and equivalences between information-flow policies. 
In this paper, we present \tool{}, a tool that can automatically check satisfiability of hyperproperties specified in the temporal logic HyperLTL.
\tool{} reduces the problem to an equisatisfiable first-order logic (FOL) formula, which allows us to leverage FOL solvers for the analysis of hyperproperties.
As such, \tool{} is applicable to many formulas beyond the decidable $\exists^*\forall^*$ fragment of HyperLTL. 
Our experiments show that \tool{} is particularly useful for proving that a formula is unsatisfiable, and complements existing bounded approaches to satisfiability.
\end{abstract}

\section{Introduction}

Hyperproperties \cite{ClarksonS08} are system properties that relate multiple execution traces of a system and commonly occur in areas such as information-flow control \cite{Rabe16,ZdancewicM03}, robustness \cite{BiewerDFGHHM22}, planning \cite{WangNP20}, linearizability \cite{HsuSB21,HerlihyW90}, and opacity \cite{AnandMTZ24}.
A prominent logic to express hyperproperties is HyperLTL, which extends linear-time temporal logic (LTL) with explicit quantification over traces \cite{ClarksonFKMRS14}.
For instance, HyperLTL can express generalized non-interference (GNI) \cite{McCullough88}, stating that the high-security input (modeled via atomic proposition $h$) of a system does not influence the observable output, as follows:
\begin{align*}\label{eq:gni}
	\forall \pi_1\ldot \forall \pi_2\ldot \exists \pi_3\ldot \ltlG \big((l_{\pi_1} \leftrightarrow l_{\pi_3} )\land (o_{\pi_1} \leftrightarrow o_{\pi_3}) \big) \land \,\ltlG  (h_{\pi_2} \leftrightarrow h_{\pi_3}) \tag{GNI}
\end{align*}
This formula requires that for any pair of traces $\pi_1, \pi_2$, there exists a third trace $\pi_3$ that agrees with the low-security observations (i.e., low-security input $l$ and public output $o$) of $\pi_1$, and with the high-security input $h$ of $\pi_2$. 
Phrased differently, any low-security observation of an attacker is possible under \emph{all} possible high-security input sequences.
Similarly, we can express a variant of McLean's non-inference (NI) \cite{McLean94} as follows:
\begin{align*}\label{eq:ni}
	\forall \pi_1\ldot \exists \pi_2\ldot \ltlG \big((l_{\pi_1} \leftrightarrow l_{\pi_2} )\land (o_{\pi_1} \leftrightarrow o_{\pi_2})\big) \land \, \ltlG ( \neg h_{\pi_2} )\tag{NI}
\end{align*}
Stating that for every $\pi_1$, some trace $\pi_2$ has the same low-security observations despite having a fixed high-security input (we require that $h$ never holds on $\pi_2$).

\paragraph{Satisfiability.}

We are interested in the question of whether a HyperLTL formula can be satisfied by some set of traces. 
Such satisfiability checks can spot inconsistencies in system requirements, e.g., information-flow requirements that are incompatible with functional specifications.  
Moreover, implication (and thus equivalence) checks between HyperLTL formulas reduce to satisfiability: to analyze if a HyperLTL formula $\varphi_A$ implies $\varphi_B$, we check if some set of traces can disprove the implication, i.e., satisfies $\varphi_A \land \neg \varphi_B$. 
As hyperproperties occur in many areas within computer science, satisfiability and implication checks for them are fundamental problems in many applications; for example, in optimized monitoring of fairness and symmetry properties \cite[\S 4.1]{FinkbeinerHST19}.

\paragraph{\tool{}.}

In this paper, we present \tool{}, an automatic satisfiability checker for HyperLTL. 
At its core, \tool{} encodes the HyperLTL formula into an equisatisfiable first-order logic (FOL) formula.
In case the HyperLTL formula is temporally safe -- i.e., the quantifier-free LTL body denotes a safety property, as in \ref{eq:gni} and  \ref{eq:ni} -- \tool{} produces a pure FOL formula.
For this, we employ a similar encoding to the one presented (but not implemented) in \cite{BeutnerCFHK22}.
For general HyperLTL, we present a novel encoding into FOL modulo theories. 
\tool{} thus allows us to leverage off-the-shelf FOL/SMT solvers and facilitates the use of common first-order techniques such as resolution \cite{DavisP60} and tableaux \cite{beth1955semantic} in the realm of hyperproperties. 
We evaluate \tool{} on a range of benchmarks and compare it with the existing HyperLTL satisfiability checkers \eahyper{} \cite{FinkbeinerHS17}, \mghyper{} \cite{FinkbeinerHH18}, and \lmhyper{} \cite{BeutnerCFHK22}. 
Our results show \tool{}'s FOL-based encoding works well on many formulas that could already be handled using existing methods, and can tackle important instances out of reach of previous tools. 
For example, \tool{} is the first tool that can efficiently show satisfiability for formulas that require a non-singleton model. 
Most notably, \tool{} is also the first tool that can show \emph{unsatisfiability} of HyperLTL formulas with arbitrary quantifier prefixes and thus \emph{prove} implications involving complex (alternating) formulas such as \ref{eq:gni} and  \ref{eq:ni}.

\section{Related Work.}\label{sec:related-work}

The satisfiability problem of hyperproperties -- and in particular of HyperLTL -- has been studied extensively \cite{FinkbeinerH16,FortinKT021,Mascle020,BeutnerCFHK22,CoenenFHH19,GutsfeldMO20}. 
The $\exists^*\forall^*$ fragment (containing all formulas where no existential quantifier occurs in the scope of a universal quantifier) is the largest fragment (defined by quantifier structure) for which satisfiability is decidable \cite{FinkbeinerH16}. 
The addition of a $\forall\exists$ alternation leads to undecidability ($\Sigma_1^1$-completeness) \cite{FortinKT021}.
Decidable fragments can be obtained by bounding the size of a model \cite{Mascle020} or by (severely) restricting the use of temporal operators \cite{Mascle020,BeutnerCFHK22}.
By extension of Kamp's theorem \cite{kamp1968tense}, each HyperLTL formula can be translated to an equivalent formula in FOL of order (FO$[<]$)  using a special equal-level predicate (FO$[<,E]$) \cite{Finkbeiner017,CoenenFHH19}; a fragment that existing solvers do not support.
In contrast, \tool{} produces an equisatisfiable (but not necessarily equivalent) encoding in either pure FOL or FOL modulo linear integer arithmetic (LIA), which allows us to employ off-the-shelf FOL and SMT solvers.

The theoretical studies into HyperLTL satisfiability have been complemented by extensive tool development:
\eahyper{} \cite{FinkbeinerHS17} is applicable to the decidable $\exists^*\forall^*$ fragment and reduces to LTL satisfiability by  unfolding all dependencies between existential and universal quantifiers;
\mghyper{} \cite{FinkbeinerHH18} attempts to find finite models of bounded size via a QBF encoding; and \lmhyper{} \cite{BeutnerCFHK22} is applicable to $\forall^1\exists^*$ formulas (i.e., formulas with a single leading universal quantifier) and attempts to find the \emph{largest} model using iterative automaton projections.    

\section{Preliminaries and HyperLTL}

\paragraph{HyperLTL.}
We assume that $\ap$ is a finite set of atomic propositions (AP). 
As the basic specification language for hyperproperties, we use HyperLTL, an extension of LTL with explicit quantification over traces \cite{ClarksonFKMRS14}.
Let $\pathVars = \{\pi_1, \pi_2, \ldots\}$ be a set of \emph{trace variables}.
HyperLTL formulas are generated by the following grammar
\begin{align*}
	\psi &:= a_\pi \mid \psi \land \psi \mid \neg \psi \mid \ltlN \psi \mid \psi \ltlU \psi  \\
	\varphi &:= \forall \pi \ldot \varphi \mid \exists \pi \ldot \varphi \mid \psi
\end{align*}
where $a \in \ap$ is an atomic proposition, $\pi \in \pathVars$ is a trace variable, and $\ltlN, \ltlU$ denote LTL's \emph{next} and \emph{until} operator, respectively. 
We only consider closed formulas where for each atom $a_\pi$, trace variable $\pi$ is bound by some outer quantifier. 
We use the usual derived boolean constants and connectives $\mathit{true}, \mathit{false}, \lor, \to, \leftrightarrow$ and temporal operators \emph{eventually} $\ltlF \psi := \mathit{true} \ltlU \psi$, \emph{globally} $\ltlG \psi := \neg \ltlF \neg \psi$, and \emph{weak until} $\psi_1 \ltlW \psi_2 := (\psi_1 \ltlU \psi_2) \lor \ltlG \psi_1$.
Given a set of traces $\traceset \subseteq (2^\ap)^\omega$, we can define the semantics of a HyperLTL formula $\varphi$ as expected: Quantification ranges over traces in $\traceset$ and binds them to the respective trace variable;
The LTL body $\psi$ is evaluated synchronously on the resulting traces, where atomic formula $a_\pi$ holds whenever $a$ holds in the current position on the trace bound to $\pi$; Boolean and temporal operators are evaluated as expected. 
We write $\traceset \models \varphi$ if $\traceset$ satisfies $\varphi$.
We give the full semantics in \ifFull{\Cref{app:semantics}}{the full version}.
A formula $\varphi$ is \emph{satisfiable} if there exists some set of traces $\traceset \subseteq (2^\ap)^\omega$ with $\traceset \neq \emptyset$ and $\traceset \models \varphi$.

\paragraph{NSA.}

A \emph{nondeterministic safety automaton} (NSA) over some finite alphabet $\Sigma$ is a tuple $\calA = (Q, Q_0, \delta, B)$ where $Q$ is a finite set of states, $Q_0 \subseteq Q$ is a set of initial states, $\delta \subseteq Q \times \Sigma \times Q$ is a transition relation, and $B \subseteq Q$ is a set of bad states. 
A run of $\calA$ on a word $u \in \Sigma^\omega$ is an infinite sequence $\rho \in Q^\omega$ such that $\rho(0) \in Q_0$ and for every $i \in \nat$, $(\rho(i), u(i), \rho(i+1)) \in \delta$. 
The run $\rho$ is accepting if it \emph{never} visits a state in $B$.
We say an LTL formula $\psi$ denotes a \emph{safety} property if there exists an NSA that accepts exactly the words that satisfy $\psi$ \cite{KupfermanV99}.

\paragraph{First-Order Logic.}

A \emph{many-sorted signature} is a tuple $(\calS, \frakF, \frakP)$ where $\calS$ is a set of sorts, $\frakF$ is a set of function symbols, and $\frakP$ is a set of predicate symbols.
Each $f \in \frakF$ has a unique sort $f : S_1 \times \cdots \times S_m \to S$ (where $S_1, \ldots, S_m, S \in \calS$).
Likewise, each predicate $P \in \frakP$ has a unique sort $P : S_1 \times \cdots \times S_m$.
A \emph{term of sort $S$} is recursively defined as either \textbf{(1)} a variable $x$ of sort $S$, or \textbf{(2)} a function application $f(t_1, \ldots, t_m)$ where $f \in \frakF$ is a function symbol of sort $f : S_1 \times \cdots \times S_m \to S$, and $t_1, \ldots, t_m$ are terms of sort $S_1, \ldots, S_m$, respectively. 
A \emph{FOL formula} is recursively defined as 
\begin{align*}
	\theta := \neg \theta \mid \theta \land \theta \mid \forall x : S \ldot \theta \mid \exists x : S \ldot \theta \mid P(t_1, \ldots, t_m)
\end{align*}
where $S \in \calS$ is a sort, $x$ is a variable of sort $S$, $P : S_1 \times \cdots \times S_m$ is a predicate symbol, and $t_1, \ldots, t_m$ are terms of sort $S_1, \ldots, S_m$, respectively. 
A \emph{FOL interpretation} of signature $(\calS, \frakF, \frakP)$ is a mapping $\calI$ that maps each sort $S \in \calS$ to a set $S^\calI$, each function symbol $f : S_1 \times \cdots \times S_m \to S$ to a function $f^\calI : S_1^\calI \times \cdots \times S_m^\calI\to S^\calI$, and each predicate symbol $P : S_1 \times \cdots \times S_m$ to a relation $P^\calI \subseteq S_1^\calI \times \cdots \times S_m^\calI$.
The interpretation $\calI$ satisfies a formula $\theta$, written $\calI \models \theta$,  if $\theta$ evaluates to $\mathit{true}$ under $\calI$ (defined as expected). We refer the reader to \cite{barwise1977introduction} for details.

\section{First-Order Logic Encoding}

We consider a fixed HyperLTL formula $\varphi = \quant_1 \pi_1\ldots \quant_n \pi_n\ldot \psi$, where $\quant_1, \ldots, \quant_n \in \{\forall, \exists\}$ are quantifiers and $\psi$ is the quantifier-free LTL body.

\subsection{Temporally Safe HyperLTL}\label{sec:safe}

We first consider the fragment of HyperLTL that we can encode in pure FOL.
We say a $\varphi$ is \emph{temporally safe} if the LTL body $\psi$ denotes a safety property \cite{BeutnerCFHK22}. 
That is, there exist a NSA $\calA_\psi = (Q_\psi, Q_{0, \psi}, \delta_\psi, B_\psi)$ over alphabet $2^{\ap \times \{\pi_1, \ldots, \pi_n\}}$ that accepts exactly the words $u \in \big(2^{\ap \times \{\pi_1, \ldots, \pi_n\}}\big)^\omega$ that satisfy $\psi$ (recall that atomic formulas in $\psi$ have the form $a_{\pi_i} \in \ap \times \{\pi_1, \ldots, \pi_n\}$).
Note that temporal safety is a decidable criterium \cite{KupfermanV99} that only refers to the LTL body of $\varphi$; it does not refer to the models of $\varphi$, as, e.g., used in the definition of \emph{hypersafety} \cite{ClarksonS08}.

Intuitively, as $\psi$ denotes a safety property, we do not need an accurate interpretation of time (as, e.g., used in the HyperLTL semantics).
Instead, it suffices to be able to -- for each time point -- refer to the \emph{successor} time point, something we can easily achieve in  pure FOL (cf.~\cite{BeutnerCFHK22}).
We use a signature $(\calS, \frakF, \frakP)$ where \textbf{(1)} $\calS := \{\traceSort{}, \timeSort{}\}$, \textbf{(2)} $\frakF$ contains constants (i.e., nullary functions) $i_0 : \timeSort{}$ and $t_0 : \traceSort{}$ (by adding the constant $t_0$, we ensure that the sort $\traceSort$ is non-empty), and a successor function $\mathit{succ} : \timeSort{} \to \timeSort{}$, and \textbf{(3)} $\frakP$ includes the following: for each $a \in \ap$ a predicate $P_a :  \traceSort{} \times \timeSort{}$, and for each $q \in Q_\psi$ a predicate $\statePred{q} : (\times_{i=1}^n \traceSort{})\times \timeSort{}$.

We define $\theta_\varphi$ as the FOL formula over $(\calS, \frakF, \frakP)$ in \Cref{fig:encoding}.
We first mimic the quantification in $\varphi$ over variables $x_1, \ldots, x_n$ of sort $\traceSort$ (\ref{eq:quant}). 
Intuitively, predicate $P_a(x, i)$ holds whenever the AP $a$ holds on trace $x$ in timestep $i$.
Consequently, each first-order interpretation will (implicitly) assign a concrete trace to each element of sort $\traceSort$.
Our encoding ensures that the traces assigned to $x_1, \ldots, x_n$ satisfy $\psi$, i.e., $\calA_\psi$ has an accepting run. 
To track the run of $\calA_\psi$, we use the $\{\statePred{q}\}_{q \in Q_\psi}$ predicates:
Intuitively, $\statePred{q}(x_1, \ldots, x_n, i)$ holds, whenever $\calA_\psi$ -- when reading traces $x_1, \ldots, x_n$ -- \emph{can} be in state $q$ at time point $i$.
We impose three constraints that encode a valid run of $\calA_\psi$.
First, we require that $\calA_\psi$ starts in some initial state at our fixed initial time constant $i_0$ (\ref{eq:safety-quant}).
Second, the timestep should encode valid transitions of $\calA_\psi$. 
To express this, we consider all possible timepoints $i : \timeSort$ and states $q \in Q_\psi$, assuming that the automaton can be in state $q$ at timepoint $i$ (i.e., $\statePred{q}(x_1, \ldots, x_n, i)$) (\ref{eq:safety-premise}).
In this case, there should exist some edge $(q, \sigma, q') \in \delta_\psi$ (starting from the current state $q$), such that we \textbf{(1)} move to $q'$ at timepoint $\mathit{succ}(i)$ (i.e., $\statePred{q'}(x_1, \ldots, x_n, \mathit{succ}(i))$ holds; the first conjunct in \ref{eq:safety-cons}), and \textbf{(2)} $\sigma \in 2^{\ap \times \{\pi_1, \ldots, \pi_n\}}$ is the label at the current time point $i$.
The latter means that for every $a_{\pi_j} \in \ap \times \{\pi_1, \ldots, \pi_n\}$, we have $a_{\pi_j} \in \sigma$ if and only if $a$ holds at time point $i$ on trace $x_j$.
Phrased differently, $P_a(x_j, i)$ holds for all $a_{\pi_j} \in \sigma$, and $\neg P_a(x_j, i)$ holds for all $a_{\pi_j} \not\in \sigma$ (the last two conjuncts in \ref{eq:safety-cons}).
Lastly, $\calA_\psi$ should never visit a bad state (\ref{eq:safety-safe}).

\begin{figure}[!t]
	\begin{align}
		&\quant_1 \, x_1 : \traceSort{} \ldots \quant_n \,  x_n  : \traceSort{} \ldot  \label{eq:quant} \\
		&\quad\Big(\bigvee_{q \in Q_{0, \psi}} \statePred{q}\big(x_1, \ldots, x_n, i_0\big) \Big) \, \land \label{eq:safety-quant} \\
		&\quad\Bigg(\forall i : \timeSort{}\ldot \bigwedge_{q \in Q_\psi}  \bigg[ \statePred{q}\big(x_1, \ldots, x_n, i\big) \to  \label{eq:safety-premise} \\
		&\quad\quad\bigvee_{(q, \sigma, q') \in \delta_\psi} \!\!\bigg(  \statePred{q'}\big(x_1, \ldots, x_n, \mathit{succ}(i)\big) \land \bigwedge_{a_{\pi_j} \in \sigma} \! P_a(x_j, i) \land \!\!\bigwedge_{a_{\pi_j} \not\in \sigma} \!\neg P_a(x_j, i) \bigg) \bigg] \Bigg) \, \land  \label{eq:safety-cons} \\
		&\quad\Big(\forall i : \timeSort{}\ldot \bigwedge_{q \in B_\psi} \neg \statePred{q}\big(x_1, \ldots, x_n, i\big)\Big) \label{eq:safety-safe}
	\end{align}
	\vspace{-4mm}
	\caption{We depict the FOL formula $\theta_\varphi$ over signature $(\calS, \frakF, \frakP)$.}\label{fig:encoding}
\end{figure}

\begin{restatable}{theorem}{correct}\label{theo:correct1}
	Assume $\varphi$ is a temporally safe HyperLTL formula.
	Then $\varphi$ is satisfiable if and only if there exists a FOL interpretation that satisfies $\theta_\varphi$.
\end{restatable}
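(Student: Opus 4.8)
The plan is to prove the two directions of the equivalence separately, in each case exploiting that $\theta_\varphi$ and $\varphi$ carry the \emph{same} quantifier prefix $\quant_1 x_1 \ldots \quant_n x_n$, so that the argument reduces to a statement about a single fully instantiated tuple together with a transfer of the winning quantifier strategy. For a fixed tuple the crux is the correspondence between ``the combined word satisfies $\psi$'' (equivalently, since $\calA_\psi$ recognizes exactly the models of $\psi$, ``$\calA_\psi$ has an accepting run on it'') and ``the body of $\theta_\varphi$ formed by the initiation, transition/label-consistency, and bad-state-freeness constraints is witnessed by the $\statePred{q}$ predicates.''

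For the direction from left to right, suppose $\traceset \models \varphi$ with $\traceset \neq \emptyset$. I would build the interpretation $\calI$ with $\traceSort^\calI := \traceset$, $\timeSort^\calI := \nat$, $\mathit{succ}^\calI(i) := i+1$, $i_0^\calI := 0$, $t_0^\calI$ any trace of $\traceset$, and $P_a^\calI(t,i) :\iff a \in t(i)$. The essential choice is the interpretation of the run predicates: I would let $\statePred{q}^\calI(t_1,\ldots,t_n,i)$ hold iff $\calA_\psi$ has an \emph{accepting} run on the combined word $u_{t_1,\ldots,t_n}$ that is in state $q$ at position $i$. Marking merely the \emph{reachable} states would be wrong, as a reachable state may be bad or a dead end, breaking the safety or transition constraints. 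With this choice, the bad-state-freeness constraint holds unconditionally, and whenever the tuple satisfies $\psi$ (so the combined word is accepted) the witnessing accepting run simultaneously discharges the initiation constraint and provides, for every marked state $q$ at time $i$, the successor $\rho(i{+}1)$ and the matching label $u(i)$ required by the transition constraint. Since $\traceset\models\varphi$ and both prefixes range over the identical domain $\traceset$, reusing the winning quantifier strategy of $\varphi$ yields $\calI\models\theta_\varphi$.

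For the converse, let $\calI$ be any model of $\theta_\varphi$. Here the domain $\timeSort^\calI$ need not be $\nat$, so I would recover the relevant time points as the chain $\tau_0 := i_0^\calI$, $\tau_{k+1} := \mathit{succ}^\calI(\tau_k)$, and read off, for each element $d \in \traceSort^\calI$, a genuine $\omega$-trace $t_d$ via $a \in t_d(k) :\iff P_a^\calI(d,\tau_k)$. Setting $\traceset := \{t_d \mid d\in\traceSort^\calI\}$ gives a non-empty trace set, as the constant $t_0$ guarantees $\traceSort^\calI \neq \emptyset$. Fixing an element tuple $d_1,\ldots,d_n$ for which the body of $\theta_\varphi$ holds, I would construct an accepting run of $\calA_\psi$ on $u_{t_{d_1},\ldots,t_{d_n}}$ step by step: the initiation constraint supplies a starting initial state marked at $\tau_0$, and the transition constraint, applied along $\tau_0,\tau_1,\ldots$, repeatedly yields a marked successor state together with an edge whose label, by the $P_a$-consistency conjuncts, equals the combined-word letter at that position; the bad-state-freeness constraint guarantees that every state occurring along this chain, being marked, is non-bad. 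The resulting run is accepting, so $\psi$ holds on the trace tuple. Finally I would transfer the FOL quantifier strategy to $\varphi$: whenever the HyperLTL game presents a universal trace $t\in\traceset$, I pick any representative $d$ with $t_d=t$ and feed it to the FOL strategy, and I translate each existential element $d'$ it returns into $t_{d'}\in\traceset$; every resulting leaf tuple satisfies $\psi$, whence $\traceset\models\varphi$.

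The step I expect to be most delicate is this second direction, precisely because nothing constrains the interpretation of $\timeSort$: $\mathit{succ}^\calI$ may be non-injective, may contain cycles reachable from $i_0$, and may leave elements unreachable from $i_0$. The reason the argument nevertheless goes through is the \emph{local}, successor-based and universally-time-quantified shape of the body of $\theta_\varphi$: following only the $i_0$-chain always yields a well-defined $\omega$-indexed trace and a well-defined infinite run, and the universal quantification over $\timeSort$ ensures the transition and safety constraints apply at every $\tau_k$ irrespective of how the remaining time elements behave. I would be careful to note that assembling the infinite run from the transition constraint requires choosing a successor at each step (an appeal to dependent choice), and that the map $d\mapsto t_d$ need be neither injective nor surjective onto any canonical index set; all the strategy transfer uses is that it lands in $\traceset$.
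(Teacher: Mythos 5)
Your proof is correct and follows essentially the same route as the paper's: the same interpretation $\traceSort^{\calI}=\traceset$, $\timeSort^{\calI}=\nat$ with accepting runs of $\calA_\psi$ encoded in the $\statePred{q}$ predicates for one direction, and the same extraction of traces along the $\mathit{succ}$-chain starting from $i_0$ (with the same caveat about possibly cyclic time) for the other. The only, harmless, deviation is that you mark $\statePred{q}(t_1,\ldots,t_n,i)$ whenever \emph{some} accepting run is in $q$ at position $i$, whereas the paper fixes a single accepting run per trace tuple; both choices validate the initiation, transition, and bad-state constraints, and you in fact spell out the run-construction (via dependent choice) and quantifier-strategy transfer that the paper compresses into ``a simple induction.''
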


\begin{remark}\label{rem:old}
	Compared to \cite{BeutnerCFHK22}, we modify the FOL encoding by using a successor \emph{function} instead of a successor \emph{predicate}, requiring a modified correctness proof (see \ifFull{\Cref{app:proof}}{the full version}).
	Our experiments with FOL solvers show that our function-based encoding is superior to a predicate-based encoding. 
	\demo
\end{remark}

\paragraph{Finite Models.}

Most existing HyperLTL satisfiability checkers search for finite models:
\mghyper{} \cite{FinkbeinerHH18} uses an \emph{explicit} bound on the size of the model that is iteratively increased. 
\eahyper{} \cite{FinkbeinerHS17} uses an \emph{implicit} bound, i.e., a $\exists^n\forall^m$ formula never requires a model with more than $n$ traces \cite{FinkbeinerH16}.
We can analyze our FOL encoding to understand the theoretical relation between \tool{} and \mghyper{}:

\begin{restatable}{theorem}{finite}\label{prop:finite}
	Assume $\varphi$ is a temporally safe HyperLTL formula.
	Then, $\varphi$ is satisfiable by a \emph{finite} set of traces (i.e., there exists a set of traces $\traceset   \subseteq 2^\ap$ with $\traceset \neq \emptyset$, $|\traceset| < \infty$, and $\traceset \models \varphi$) if and only if $\theta_\varphi$ is satisfiable by a \emph{finite} FOL interpretation (i.e., there exists an FOL interpretation $\calI$ with $\calI \models \theta_\varphi$ and $|S^\calI| < \infty$ for every sort $S \in \calS$).
\end{restatable}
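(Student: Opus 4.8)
The plan is to prove both directions of the equivalence by refining the correspondence already established in \Cref{theo:correct1}, this time additionally tracking the \emph{cardinality} of the model and of the interpretation. The conceptual bridge between a finite trace model and a finite FOL interpretation is \emph{ultimate periodicity}: a finite interpretation of the sort $\timeSort{}$ forces the orbit of $i_0$ under $\mathit{succ}$ to be a lasso, which corresponds exactly to ultimately periodic traces, and conversely every finite set of ultimately periodic traces induces a finite time domain. The whole argument therefore reduces to showing that, for temporally safe $\varphi$, having a finite model is equivalent to having a finite model consisting of ultimately periodic traces.

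For the direction from a finite interpretation to a finite set of traces, suppose $\calI \models \theta_\varphi$ with every sort finite. Since $\timeSort^\calI$ is finite, the sequence $i_0, \mathit{succ}^\calI(i_0), (\mathit{succ}^\calI)^2(i_0), \dots$ is eventually periodic; let $o_0, o_1, \dots$ denote it. For each $x \in \traceSort^\calI$ define the trace $t_x \in (2^\ap)^\omega$ by $t_x(k) := \{ a \in \ap \mid P_a^\calI(x, o_k) \}$, which is a well-defined (ultimately periodic) $\omega$-word, and set $\traceset := \{ t_x \mid x \in \traceSort^\calI \}$. This is a nonempty set of size at most $|\traceSort^\calI| < \infty$. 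Mirroring the quantifier prefix \ref{eq:quant} over the finite domain $\traceSort^\calI$, the constraints \ref{eq:safety-quant}, \ref{eq:safety-cons}, and \ref{eq:safety-safe} force, along the orbit $o_0, o_1, \dots$, an infinite run of $\calA_\psi$ that starts in an initial state and never enters $B_\psi$; hence the chosen tuple satisfies $\psi$ and $\traceset \models \varphi$. This is essentially the decoding underlying \Cref{theo:correct1}, made clean by the lasso shape of the orbit.

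The main work lies in the converse. Given a finite $\traceset = \{t_1, \dots, t_m\}$ with $\traceset \models \varphi$, I first replace it by a finite set of \emph{ultimately periodic} traces via a simultaneous pumping argument. For every index assignment $g \in \{1,\dots,m\}^n$, let $u_g := (t_{g(1)}, \dots, t_{g(n)})$ and let $R_g^k \subseteq Q_\psi$ be the set of states lying on some accepting (i.e.\ $B_\psi$-avoiding) run of $\calA_\psi$ on $u_g$ after $k$ steps. The tuple $(R_g^k)_{g}$ ranges over the finite set $(2^{Q_\psi})^{m^n}$, so by pigeonhole there are $k_1 < k_2$ with $(R_g^{k_1})_g = (R_g^{k_2})_g$; define $t_j' := t_j[0,k_1)\,\big(t_j[k_1,k_2)\big)^\omega$ and $\traceset' := \{t_1', \dots, t_m'\}$. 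Because $\calA_\psi$ recognizes a safety property, acceptance of $u_g$ is \emph{preserved} under this pumping (the accepting-run states occurring in the looped segment remain non-bad), so $u_g \models \psi$ implies $u_g' \models \psi$; mirroring a winning strategy of the quantifier game on $\traceset$ through the correspondence $t_j \leftrightarrow t_j'$ then yields $\traceset' \models \varphi$. Finally, I build a finite $\calI$ by taking $\timeSort^\calI := \{0, \dots, k_2-1\}$ with $\mathit{succ}^\calI(k_2-1) = k_1$ and $i_0 = 0$ (the lasso), $\traceSort^\calI := \traceset'$, $P_a^\calI$ read off the traces, and $\statePred{q}$ interpreted by the (now periodic) accepting-run state sets obtained on the looped words; these make \ref{eq:safety-quant}--\ref{eq:safety-safe} hold, so $\calI \models \theta_\varphi$ with all sorts finite.

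The step I expect to be the main obstacle is this pumping argument. Two points need care: (i) the period $(k_1,k_2)$ must be chosen to work \emph{simultaneously} for all $m^n$ assignments, which is why the global state is the product $(2^{Q_\psi})^{m^n}$; and (ii) the nondeterminism of $\calA_\psi$ must be handled through a subset/accepting-run construction rather than single runs, both so that preservation of acceptance is correct and so that the existential transition constraint \ref{eq:safety-cons} is actually satisfiable by the chosen interpretation of $\statePred{q}$ (every asserted state must retain a legal successor around the loop, cf.\ \ref{eq:safety-premise}). Crucially, it is safety that makes only \emph{one} direction of the pumping necessary --- preservation of acceptance, not of rejection --- since we only need the winning strategy to transfer from $\traceset$ to $\traceset'$.
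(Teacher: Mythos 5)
Your overall architecture matches the paper's: the ``if'' direction decodes traces from the $\mathit{succ}$-orbit of $i_0$ exactly as in \Cref{theo:correct1}, and the ``only if'' direction builds a cyclic interpretation of $\timeSort$ from a lasso-shaped model. Where you genuinely diverge is in how the lasso structure is obtained. The paper \emph{asserts} w.l.o.g.\ that a finite model can be taken to consist of lasso-shaped traces, fixes one lasso-shaped accepting run per tuple, pads all stems to a common length, and takes the \emph{least common multiple} of all loop lengths to get a single cyclic time domain. You instead run a simultaneous pumping argument: pigeonhole on the tuple $(R_g^k)_g \in (2^{Q_\psi})^{m^n}$ of accepting-run state sets over all $m^n$ index assignments to get one common period $(k_1,k_2)$ for every trace and every run at once. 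This buys two things: it actually \emph{proves} the ``w.l.o.g.\ ultimately periodic'' step that the paper leaves implicit, and it avoids the lcm/stem-padding bookkeeping entirely, since a single loop window works for everything. Your observation that safety is what makes one-sided preservation of acceptance suffice is also exactly the right reason the pumping is sound.

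One step needs repair. You interpret $\statePred{q}(\vec{x},i)$ by the accepting-run state sets of the \emph{looped} words and call them ``now periodic,'' but that periodicity does not follow from your pigeonhole: membership in such a set decomposes into a forward-reachability condition (along the prefix, avoiding $B_\psi$) and a future condition (a $B_\psi$-avoiding run on the suffix), and while the suffixes of the pumped word at times $k_1$ and $k_2$ coincide, the forward-reachable sets after $k_1$ versus $k_2$ steps need not, so the wrap-around instance of (\ref{eq:safety-cons}) can fail for a state that is asserted at time $k_2-1$ but whose only legal successors are absent at time $k_1$. The fix is immediate and uses exactly what your pigeonhole gives you: interpret $\statePred{q}(\vec{x},i)$ as $q \in R_g^i$ using the \emph{original} words' sets for $i \in \{0,\dots,k_2-1\}$. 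Then every asserted state lies on an accepting run of $u_g$, its successor on that run lies in $R_g^{i+1}$, the letters agree because $u'_g$ and $u_g$ coincide on $[0,k_2)$, and at the wrap-around the identity $R_g^{k_2} = R_g^{k_1}$ is precisely what makes $\statePred{q'}(\vec{x},\mathit{succ}(k_2-1))$ hold; constraints (\ref{eq:safety-quant}) and (\ref{eq:safety-safe}) are immediate since these sets contain an initial state (for tuples satisfying $\psi$) and never meet $B_\psi$. With that change your argument goes through.
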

\begin{proof}[Sketch]
	Computing a finite model of $\varphi$ from a finite FOL interpretation of $\theta_\varphi$ is straightforward.
	For the reverse direction, we show that we can restrict to lasso-shaped traces and lasso-shaped runs in $\calA_\psi$.
	This allows us to use a \emph{cyclic} (and thus finite) interpretation of sort $\timeSort$, based on the least common multiple of all lasso lengths. See \ifFull{\Cref{app:finite}}{the full version} for details. \qed
\end{proof}

\begin{remark}\label{rem:finite}
	By \Cref{prop:finite}, any FOL solver that searches for finite models of (increasing) bounded size will terminate in at least those instances that \mghyper{} -- the only other HyperLTL satisfiability solver applicable to arbitrary quantifier prefixes -- can solve.  
	Note that deciding if a FOL formula has a finite interpretation of bounded size is decidable \cite{mccune1994davis,ReynoldsTGK13}, and practical SEM-style \cite{ZhangZ95} or MACE-style \cite{mccune1994davis,JanotaS18} finite model finders exist; e.g., \texttt{Paradox} \cite{claessen2003new}. \demo
\end{remark}

\subsection{Full HyperLTL}\label{sec:full}

The above encoding is  limited to temporally safe HyperLTL.
We can extend the encoding to work for full HyperLTL by employing a fixed background theory such as linear integer arithmetic (LIA).
The fixed background theory admits a precise interpretation of discrete timesteps, which allows us to generalize our construction from safety automata to Büchi automata, thus extending our encoding to HyperLTL formulas with arbitrary LTL body. 
See \ifFull{\Cref{app:full}}{the full version} for details.
We note that our extended LIA encoding is, currently, mostly of theoretical interest:
Our experiments with \tool{} show that the addition of LIA often yields instances that current SMT solvers cannot solve.

\section{Implementation}

We have implemented the encodings from \Cref{sec:safe,sec:full} in an \texttt{F\#} tool we call \tool{}.
For the translation of the LTL body to an automaton, we use \texttt{spot} \cite{Duret-LutzRCRAS22}.
\tool{} features a minimalist command-line interface and reads a HyperLTL formula using an extension of \texttt{spot}'s default LTL format. 
Depending on the outcome of the underlying FOL query, \tool{} outputs \texttt{SAT}, \texttt{UNSAT}, or \texttt{UNKNOWN} (or diverges if the underlying FOL solver diverges).
The user has multiple command-line options that guide \tool{}'s behavior.
First, the user can choose between the function-based encoding (presented in \Cref{sec:safe}), the predicate-based encoding (cf.~\Cref{rem:old}), and the encoding into FOL modulo LIA (\Cref{sec:full}). 
Second, the user can determine if \tool{} should produce the FOL formula in the \texttt{SMTLIB} format \cite{barrett2010smt} (supported by most SMT solvers) or the \texttt{TPTP} format \cite{SutcliffeSY94} (the format used for the annual ATP competition, supported by most pure FOL solvers). 
Lastly, the user can determine if \tool{} should only compute the encoding or also attempt to solve it using existing FOL/SMT solvers such as \texttt{Vampire} \cite{KovacsV13}, \texttt{EProver} \cite{SchulzCV19}, \texttt{iProver} \cite{Korovin08}, \texttt{Paradox} \cite{claessen2003new}, \texttt{z3} \cite{MouraB08}, and \texttt{cvc5} \cite{BarbosaBBKLMMMN22}. 

The experiments in this paper were conducted on a MacBook with an M1 Pro CPU and 32 GB of memory. 
If not specified otherwise, we use \texttt{Vampire} (version 4.6.1) \cite{KovacsV13} as \tool{}'s default FOL solver.

\begin{table}[!t]
	
	\caption{We check if $\mathit{QN}(\mathit{n})$ implies $\mathit{QN}(\mathit{m})$ where $n$ is given by the row and $m$ by the column. For each combination, we give the time of \eahyper{} \cite{FinkbeinerHS17} (using \texttt{PLTL} \cite{Schwendimann98} for LTL satisfiability checks) and \tool{} in seconds. The timeout (denoted ``-'') is set to 60 seconds.
	}\label{tab:eval-qn}
	
	\centering
	
	\vspace{1mm}
	
	\small
	
	\def\arraystretch{1.1}
	\setlength\tabcolsep{1.7mm}

	\begin{tabular}{l@{\hspace{4mm}}lccccccc}
		\toprule
		& & 1 & 2 & 3 & 4 & 5 & 6 & 7 \\
		\midrule
		\multirow{2}{*}{1} & \eahyper{} \cite{FinkbeinerHS17} &  \textbf{0.01} &  \textbf{0.01 }&  \textbf{0.01 }&  \textbf{0.01} &  \textbf{0.01} &  \textbf{0.01} &  \textbf{0.01 }\\
		& \tool{} &  0.17  &  0.17  &  0.18  &  0.17  &  0.17  &  0.17  &  0.16 \\
		\midrule
		\multirow{2}{*}{2}& \eahyper{} \cite{FinkbeinerHS17}&  \textbf{0.01} &  \textbf{0.01 }&  \textbf{0.01} &  \textbf{0.01} &  \textbf{0.02} &  \textbf{0.03} &  \textbf{0.04} \\
		& \tool{} &  0.18  &  0.19  &  0.17  &  0.17  &  0.17  &  0.16  &  0.17 \\
		\midrule
		\multirow{2}{*}{3}&\eahyper{} \cite{FinkbeinerHS17}&  \textbf{0.01} &  \textbf{0.01} &  \textbf{0.02} &  \textbf{0.03} &  \textbf{0.06} &  \textbf{0.13 }&  0.27 \\
		&\tool{}&  0.17  &  0.16  &  0.17  &  0.17  &  0.17  &  0.17  &  \textbf{0.18} \\
		\midrule
		\multirow{2}{*}{4}&\eahyper{} \cite{FinkbeinerHS17}&  \textbf{0.01} &  \textbf{0.01} &  \textbf{0.06} & \textbf{ 0.13} &  0.36 &  0.95 &  2.19 \\
		&\tool{}&  0.18  &  0.17  &  0.17  &  0.17  &  \textbf{0.17}  &  \textbf{0.17}  &  \textbf{0.18} \\
		\midrule
		\multirow{2}{*}{5}&\eahyper{} \cite{FinkbeinerHS17}&  \textbf{0.01} &  \textbf{0.02 }&  1.06 & 15.12 &  6.26 & 11.98 & 26.86 \\
		&\tool{}&  0.17  &  0.17  &  \textbf{0.17}  &  \textbf{0.18}  &  \textbf{0.18}  &  \textbf{0.18}  &  \textbf{0.17}\\
		\midrule
		\multirow{2}{*}{6}&\eahyper{} \cite{FinkbeinerHS17}&  \textbf{0.01} &  \textbf{0.05} & 25.70 & - & - & - & - \\
		&\tool{}&  0.20  &  0.19  &  \textbf{0.18}  &  \textbf{0.19}  &  \textbf{0.17}  &  \textbf{0.19 } &  \textbf{0.17} \\
		\midrule
		\multirow{2}{*}{7}&\eahyper{} \cite{FinkbeinerHS17}& \textbf{0.01} &  \textbf{0.11} & - & - & - & - & - \\
		&\tool{}& 0.18  &  0.19  &  \textbf{0.19}  &  \textbf{0.19} &  \textbf{0.21}  &  \textbf{0.20}  &  \textbf{0.20} \\
		\bottomrule
\end{tabular}

\end{table}

\section{Evaluation on $\exists^*\forall^*$ Formulas}\label{sec:ef}

The $\exists^*\forall^*$ fragment of HyperLTL constitutes the largest fragment where satisfiability is decidable \cite{FinkbeinerH16}.
\eahyper{} \cite{FinkbeinerHS17} implements a decision procedure for $\exists^n\forall^m$ formulas by explicitly unfolding all ($n^m$ many) dependencies between traces, thereby producing an equisatisfiable LTL formula.
We compare \tool{} and \eahyper{} on a subset of the benchmarks used to evaluate \eahyper{} in \cite{FinkbeinerHS17}.

\paragraph{Quantitative Noninterference.}

Quantitative noninterference (QN) requires that the leakage of a system is bound by some constant $c$ \cite{Smith09,ClarkHM05}.
In HyperLTL, we can express a variant of QN by requiring that for any input ($\mathit{In} \subseteq \ap$), there are at most $c \in \nat$ many possible outputs ($\mathit{Out} \subseteq \ap$) (see \cite{FinkbeinerHS17} for details):
\begin{align*}
	\mathit{QN}(c) := \forall \pi_0 \ldots \forall \pi_c \ldot \neg \bigg( \Big(\bigwedge_{i} \bigwedge_{a \in \mathit{In}} (a_{\pi_i} \leftrightarrow a_{\pi_0}) \Big) \land \Big(\bigwedge_{i \neq j} \; \bigvee_{a \in \mathit{Out}} (a_{\pi_i} \not\leftrightarrow a_{\pi_j}) \Big) \bigg)
\end{align*}
For varying values of $n, m \in \nat$, we check whether $\mathit{QN}(n)$ implies $\mathit{QN}(m)$.
We depict the runtimes in \Cref{tab:eval-qn}. 
For small values of $n$ and $m$, the unfolded LTL formula produced by \eahyper{} is very small, and LTL-specific satisfiability tools are more performant than our FOL encoding. 
However, for larger values, the size of the LTL formulas constructed by \eahyper{} grows exponentially, and so does the overall runtime. 
In contrast, \tool{} can check such formulas very effectively as the FOL encoding is of polynomial size.

\paragraph{Random Formulas.}

Following \cite{FinkbeinerHS17}, we also compare \tool{} and \eahyper{} on random $\exists^*\forall^*$ formulas. 
We, again, observe that \eahyper{} is faster than \tool{} on instances with few quantifiers. 
However, when the number of quantifiers in the formula increases, the unfolding performed by \eahyper{} is clearly outperformed by \tool{}.
We give a table in \ifFull{\Cref{app:eahyper}}{the full version}.
We conclude that \tool{} scales better on $\exists^*\forall^*$ formulas when the number of quantifiers increases. 

\section{Evaluation on $\forall^1\exists^*$ Formulas}

In the previous section, we focused on formulas for which satisfiability is decidable. 
We now consider formulas beyond the $\exists^*\forall^*$ fragment, moving us to the realm of undecideability. 
We compare with \mghyper{} \cite{FinkbeinerHH18} and \lmhyper{} \cite{BeutnerCFHK22}.

\paragraph{Random Formulas.}

Following \cite{BeutnerCFHK22,FinkbeinerHH18}, we sample random temporally safe $\forall^1\exists^m$ formulas for varying values of $m$.
For each $m$, we sample 20 random formulas and report the results in \Cref{tab:random_fe}.
As stated in \Cref{prop:finite}, \tool{} combined with \texttt{Paradox} can, in theory, solve at least all instances solved by \mghyper{}.
As already noted in \cite{BeutnerCFHK22}, random generation often yields formulas that are satisfiable by a single trace model, as atomic propositions are seldom shared across different trace variables. 
This explains the high success rate (and low computation time) of \mghyper{}, even though \mghyper{} can \emph{never} prove unsatisfiability.
When comparing \lmhyper{} and \tool{}, we observe that larger values for $m$ increase the runtime of \lmhyper{} slightly (which performs automaton complementations for each trace variable) but do not seem to impact \tool{}'s performance.

\begin{table}[!t]
	\caption{We evaluate on random temporally safe $\forall^1\exists^m$ for varying values of $m$ (the column). We give the percentage of solved formulas (Sol) within a timeout of 10 seconds, and the average runtime (in seconds) on solved instances ($t$). }\label{tab:random_fe}
	\vspace{1mm}
	\centering
	\small
	
	\def\arraystretch{1.1}
	\setlength\tabcolsep{1.7mm}
	
	\begin{tabular}{lcccccccc}
		\toprule
		& \multicolumn{2}{c}{\textbf{1}} & \multicolumn{2}{c}{\textbf{3}} & \multicolumn{2}{c}{\textbf{5}} & \multicolumn{2}{c}{\textbf{7}} \\
		\cmidrule[1pt](lr){2-3}
		\cmidrule[1pt](lr){4-5}
		\cmidrule[1pt](lr){6-7}
		\cmidrule[1pt](l){8-9}
		& \textbf{Sol} & $\boldsymbol{t}$ & \textbf{Sol} & $\boldsymbol{t}$ & \textbf{Sol} & $\boldsymbol{t}$ & \textbf{Sol} & $\boldsymbol{t}$ \\
		\midrule
		\mghyper{} \cite{FinkbeinerHH18} & 95 \% & 0.01 & 95 \% & 0.01 & 100 \% & 0.01 & 100 \% & 0.02 \\
		\lmhyper{} \cite{BeutnerCFHK22} & 85 \% & 0.22 & 95 \% & 0.29 & 95 \% & 0.34 & 90 \% & 0.39  \\
		 \tool{} (\texttt{Vampire}) & 60 \%& 0.18 & 45 \% & 0.21 & 70 \% & 0.26 & 70 \% & 0.18 \\
		 \tool{} (\texttt{Paradox}) & 95 \%& 0.19 & 100 \% & 0.21 & 100 \% & 0.20 & 100 \% & 0.19 \\
		\bottomrule
	\end{tabular}

\end{table}

\paragraph{Enforcing Non-Singleton Models.}

In \Cref{tab:random_fe}, \mghyper{}'s bounded approach performs very well, as most random formulas are satisfiable by a model with a single trace.
To test this, we specifically design formulas that pose a bound on the size of the smallest model. 
Concretely, given $n, b \in \nat$, we define 
\begin{align*}
	 \textstyle \mathit{enforceModel}(n, b) := \exists \pi_1 \ldots \pi_n \ldot \bigwedge_{i \neq j} \ltlF^{\leq b} (a_{\pi_i} \not\leftrightarrow a_{\pi_j}),
\end{align*}
where we write $\ltlF^{\leq b}$ for the bounded eventually (which can be expressed using $\ltlN$s).
The formula requires at least $n$ traces, which (pairwise) differ on AP $a$ within the first $b$ steps; any model must thus have at least $n$ distinct traces.
Note that the above formula is satisfiable iff $n \leq 2^b$.
We check $\mathit{enforceModel}(n, b)$ for varying $n, b$ in \Cref{tab:enforce}. 
We observe that \mghyper{} can show that there exists a model with a single trace ($n = 1$), but already times out when we require at least two traces ($n \geq 2$). 

\begin{table}[!t]
	\caption{We check $\mathit{enforceModel}(n, b)$ where $n$ is given by the column and $b$ by the row.
		We display the runtime in \textcolor{sat}{\textbf{green}} if the formula is satisfiable, and \textcolor{unsat}{\textbf{red}} if it is unsatisfiable. The timeout (denoted ``-'') is set to 60 seconds.
	}\label{tab:enforce}
	
	\vspace{1mm}
	\centering
	
	\def\arraystretch{1.1}
	\setlength\tabcolsep{1.7mm}
		
		\small
		\begin{tabular}{l@{\hspace{4mm}}lccccc}
			\toprule
			& & \textbf{1} & \textbf{2} & \textbf{3} & \textbf{4} & \textbf{5} \\
			\toprule
			\multirow{2}{*}{$\boldsymbol{b = 1}$} & \mghyper{} \cite{FinkbeinerHH18} & \color{sat}\textbf{0.02}& - & - & - & -\\
			& \tool{} & \color{sat} 0.17 & \color{sat} \textbf{0.18} & \color{unsat} \textbf{0.16} & \color{unsat} \textbf{0.17} & \color{unsat} \textbf{0.18}  \\
			\midrule
			\multirow{2}{*}{$\boldsymbol{b = 2}$} & \mghyper{} \cite{FinkbeinerHH18}  & \color{sat}  \textbf{0.02} & - & - & - & -\\
			& \tool{} & \color{sat} 0.17 & \color{sat}\textbf{0.18} & \color{sat} \textbf{0.19} & \color{sat} \textbf{0.24} & \color{unsat} \textbf{0.19}\\
			\bottomrule
		\end{tabular}
	\end{table}

	\begin{table}[!t]
		\caption{We check $\mathit{unsat}(n)$ where $n$ is given by the column.
			The timeout (denoted ``-'') is set to 60 seconds. }\label{tab:unsat}
		
		\vspace{1mm}
		\centering
		
		\small
		
		\def\arraystretch{1.1}
		\setlength\tabcolsep{1.7mm}

		\begin{tabular}{lcccccc}
			\toprule
			& \textbf{0} & \textbf{1} & \textbf{2} & \textbf{3} & \textbf{4} & \textbf{5} \\
			\midrule
			\mghyper{} \cite{FinkbeinerHH18}& - & - & - & - & - &  - \\
			
			\lmhyper{} \cite{BeutnerCFHK22} & \color{unsat}0.43 & \color{unsat}0.75 & \color{unsat}1.02 & \color{unsat}1.22 & \color{unsat}1.08 & \color{unsat}1.73  \\
			
			\tool{} & \color{unsat} \textbf{0.17} & \color{unsat}\textbf{0.17} & \color{unsat}\textbf{0.17} & \color{unsat}\textbf{0.19} & \color{unsat}\textbf{0.18} & \color{unsat}\textbf{0.25} \\
			\bottomrule
		\end{tabular}
		
	\end{table}

\paragraph{Unsatisfiable Formulas.}

Consider the following formula taken from \cite{BeutnerCFHK22}
\begin{align*}
	\mathit{unsat}(n) := \forall \pi_1\ldot \exists \pi_2 \ldot \exists \pi_3\ldot
	a_{\pi_3}  \land \ltlG\big(a_{\pi_1} \to \ltlN a_{\pi_2}\big) \land \ltlN^n \ltlG \neg a_{\pi_1}
\end{align*}
It states that \textbf{(1)} there exists some trace $\pi_3$ for which $a$ holds in the first step, \textbf{(2)} for every trace $\pi_1$  satisfying $a$ at some position, some trace $\pi_2$ satisfies $a$ in the next step, and \textbf{(3)} for every trace $\pi_1$, $a$ does not hold after $n$ steps. 
It is easy to see that this formula is unsatisfiable: some trace satisfies $a$ in the first position, so some trace satisfies it in the second position, etc.; eventually contradicting \textbf{(3)}. 
We check $\mathit{unsat}(n)$ in \Cref{tab:unsat}.
As \mghyper{} can never show unsatisfiability, it times out in all instances.  
\lmhyper{} can find this violation using $n$ iterative automata projections \cite{BeutnerCFHK22}, increasing its running time with larger values of $n$. 
The FOL encoding generated by \tool{} enables \texttt{Vampire} to derive unsatisfiability via pure FOL resolution without performing any automata-based reasoning.

\paragraph{Challenging Formulas for \tool{}.}

We note that current FOL solvers struggle with certain types of HyperLTL formulas, in particular those where models are \emph{infinite}.
As an example, consider the following formula:
\begin{align*}
	\forall \pi_1. \exists \pi_2 . \exists \pi_3\ldot
	a_{\pi_3} \land  (\neg a_{\pi_1}) \ltlW (a_{\pi_1} \land \ltlN \ltlG \neg a_{\pi_1})  \land \ltlG(a_{\pi_1} \to \ltlN a_{\pi_2})
\end{align*}
It states that \textbf{(1)} some trace $\pi_3$ satisfies $a$ in the first position, \textbf{(2)} $a$ should hold at most once on each trace, and \textbf{(3)} for each trace $\pi_1$  satisfying $a$ at some position, some trace $\pi_2$ satisfies $a$ in the next position. 
This formula is satisfiable, but any satisfying model contains \emph{infinitely many traces}. 
All FOL solvers we tried were unable to prove the existence of such an infinite first-order model.

\section{Evaluation on Formulas Beyond $\exists^*\forall^*$ and $\forall^*\exists^*$}\label{sec:beyond}

\begin{table}[!t]
	\caption{We check for implications between \ref{eq:gni} and \ref{eq:ni}.
		To ensure temporal safety, we use a bounded globally $\protect\ltlG^{\leq b}$ (where $b$ is given by the column). 
		The timeout (denoted ``-'') is set to 60 seconds.}\label{tab:implications}
	
	\vspace{1mm}
	\centering
	\small
	
	\def\arraystretch{1.2}
	\setlength\tabcolsep{1.7mm}
	
	\begin{tabular}{c@{\hspace{3mm}}lcccccc}
		\toprule
		\textbf{Implication} & & \textbf{1} & \textbf{2 }& \textbf{3} & \textbf{4} & \textbf{5} & \textbf{6}  \\
		\midrule
		\multirow{2}{*}{\ref{eq:gni} $\to$ \ref{eq:ni}}& \mghyper{} \cite{FinkbeinerHH18} & \color{sat} \textbf{0.02} & \color{sat} \textbf{0.02} &\color{sat}  \textbf{0.02} & \color{sat} \textbf{0.02} & \color{sat} \textbf{0.02} & \color{sat} \textbf{0.03}  \\
		& \tool{} & \color{sat} 0.27 & \color{sat} 21.3 & \color{sat} 21.5 & \color{sat} 19.5 & \color{sat} 21.5 & \color{sat} 22.0 \\
		\midrule 
		\multirow{2}{*}{\ref{eq:ni} $\to$ \ref{eq:gni}}& \mghyper{} \cite{FinkbeinerHH18} & - & - & - & - & - &  - \\
		& \tool{} & \color{sat} \textbf{0.21} & \color{sat} \textbf{4.66} & \color{sat}  \textbf{4.74} & \color{sat} \textbf{4.89} & \color{sat} \textbf{4.98} &  \color{sat} \textbf{5.05} \\
		\bottomrule
	\end{tabular}
	
\end{table}

In the previous section, we mostly evaluated on $\forall^1\exists^*$ formulas, as this is the largest fragment supported by \lmhyper{}. 
In practice, most formulas include more than one universal quantifier (see, e.g., \ref{eq:gni} or \emph{observational determinism} \cite{ZdancewicM03}), leaving \mghyper{} \cite{FinkbeinerHH18} and \tool{} as the only applicable tools.  

\paragraph{Implications Between \ref{eq:gni} and \ref{eq:ni}.}

We use \mghyper{} and \tool{} to analyze the implications between bounded versions of \ref{eq:gni} and \ref{eq:ni} in \Cref{tab:implications}.
Our analysis reveals that \ref{eq:gni} and \ref{eq:ni} are incomparable, i.e., neither formula implies the other (so the HyperLTL query is satisfiable). 
To disprove ``\ref{eq:gni} $\to$ \ref{eq:ni}'', \mghyper{} is very efficient as this implication can be disproven using a singleton model (choose a model where no trace satisfies $\ltlG \neg h$).
In contrast, disproving ``\ref{eq:ni} $\to$ \ref{eq:gni}'' requires at least two traces. 
As we have already observed in \Cref{tab:enforce}, \mghyper{} struggles with non-singleton models.

\begin{table}[!t]
	\caption{
		We compare \mghyper{} \cite{FinkbeinerHH18} and \tool{} on hand-crafted instances.
		We display the runtime in \textcolor{sat}{\textbf{green}} if the formula is satisfiable or \textcolor{unsat}{\textbf{red}} if it is unsatisfiable.
		The timeout (denoted ``-'')  is set to 60 seconds. }\label{tab:examples}
	
	\vspace{1mm}
	\centering
	\small
	
	\def\arraystretch{1.1}
	\setlength\tabcolsep{1mm}

	\begin{tabular}{lcccccc}
		\toprule
		& \textsc{Gni$\to$Ni$_+$} & \textsc{GniLeak} & \textsc{GniLeak2} & \textsc{NiLeak2} & \textsc{AnonOd} &  \textsc{AnonLeak}   \\
		\midrule
		\mghyper{} \cite{FinkbeinerHH18} & - & \color{sat}\textbf{0.03} & - & - & -& - \\
		\tool{} & \color{unsat}\textbf{0.23} & \color{sat}0.28 & \color{unsat}\textbf{0.25} & \color{unsat}\textbf{0.19} & \color{unsat}\textbf{0.34} & \color{unsat}\textbf{0.17} \\ 
		\bottomrule
	\end{tabular}
	
\end{table}

\paragraph{Hand-Crafted Examples.}

We manually design challenging instances based on \ref{eq:gni} and \ref{eq:ni}, given in \Cref{tab:examples}. 
While \Cref{tab:implications} shows that \ref{eq:gni} does not imply \ref{eq:ni}, the implication does hold (so the query is unsatisfiable) under the additional assumption $\exists \pi\ldot \ltlG (\neg h_\pi)$ (instance ``\textsc{Gni$\to$Ni$_+$}''). 
Moreover, we can use \tool{} to check if information flow properties contradict additional functional specifications (trace properties). 
For example, the functional specification $\psi_\mathit{leak} := \forall \pi. \ltlG (h_\pi \leftrightarrow o_\pi)$ requires that $h$ directly flows to the output. 
\ref{eq:gni} in conjunction with $\psi_\mathit{leak}$ is satisfiable (instance ``\textsc{GniLeak}''), but becomes unsatisfiable as soon as we require that there should be at least two different high-security inputs (instance ``\textsc{GniLeak2}''), and similarly for \ref{eq:ni} (instance ``\textsc{NiLeak2}'').
$k$-anonymity \cite{Sweene02} requires that each low-security observation can be produced by at least $k$ different high-security inputs, which is expressible as a $\forall\exists^k$ HyperLTL formula.  
\tool{} proves that $k$-anonymity (for $k = 2$) cannot be satisfied in conjunction with observational determinism \cite{ZdancewicM03} (requiring an equal output on all traces) (instance ``\textsc{AnonOd}'').
Likewise, $k$-anonymity is unsatisfiable in conjunction with $\psi_\mathit{leak}$ (instance ``\textsc{AnonLeak}'').
Note that most of these formulas are \emph{unsatisfiable}, so \tool{} is the first tool to solve them.

\section{Summary and Conclusion}

In this paper, we have presented \tool{}, a novel satisfiability checker for HyperLTL that leverages the extensive (tool) development within the FOL community for the analysis of hyperproperties. 
On the theoretical side, we study model-theoretical properties of our encoding and present a novel encoding into FOL modulo theories that is, for the first time, applicable to full HyperLTL.
Our experiments attest that  \tool{} is very effective on temporally safe hyperproperties, and occupies a useful middle ground between existing tools. 
Compared to  \eahyper{} and \lmhyper{}, \tool{} is not limited to the $\exists^*\forall^*$ and $\forall^1\exists^*$ fragments, and -- even for $\exists^*\forall^*$ formulas -- seems to handle large numbers of quantifiers better (\Cref{tab:eval-qn}).
For formulas that can be satisfied by a model consisting of a \emph{single} trace, \mghyper{} performs generally faster (\Cref{tab:random_fe}). 
However, as soon as the property requires a non-singleton model, \mghyper{} struggles. 
On the important fragment beyond $\exists^*\forall^*$ \tool{} is thus \textbf{(1)} the first tool that is efficient for formulas that require non-singleton models (\Cref{tab:enforce,tab:implications}), and \textbf{(2)} the first tool that can show unsatisfiability and thus \emph{prove} implications (\Cref{tab:unsat,tab:examples}). 

\subsubsection*{Acknowledgments.}

This work was partially supported by the European Research Council (ERC) Grant HYPER (101055412) and by the German Research Foundation (DFG) as part of TRR 248 (389792660).

\bibliographystyle{splncs04}
\bibliography{references}

\iffullversion
\appendix

\newpage

\section{HyperLTL Semantics}\label{app:semantics}

We evaluate a HyperLTL formula in the context of a set of traces $\traceset \subseteq (2^\ap)^\omega$ and a trace assignment $\Pi : \pathVars \rightharpoonup \traceset$:
\begin{align*}
	\Pi, i &\models_\traceset a_\pi &\text{iff } \quad&a \in \Pi(\pi)(i)\\
	\Pi, i&\models_\traceset \psi_1 \land \psi_2 &\text{iff } \quad &\Pi,i \models_\traceset \psi_1 \text{ and } \Pi, i \models_\traceset \psi_2\\
	\Pi, i &\models_\traceset \neg \psi &\text{iff } \quad &\Pi, i \not\models_\traceset \psi\\
	\Pi, i &\models_\traceset \ltlN \psi &\text{iff } \quad &\Pi, i+1 \models_\traceset \psi\\
	\Pi, i &\models_\traceset \psi_1 \ltlU \psi_2 &\text{iff } \quad &\exists j\ldot  j \geq i \land \Pi, j \models_\traceset \psi_2 \text{ and }  \forall k\ldot  i \leq k < j \to  \Pi, k \models_\traceset \psi_1 \span \span\\
	\Pi, i &\models_\traceset \exists \pi \ldot \varphi  &\text{iff } \quad&\exists t \in \traceset  \ldot \Pi[\pi \mapsto t], i  \models_\traceset  \varphi\\
	\Pi, i &\models_\traceset \forall \pi \ldot \varphi  &\text{iff } \quad&\forall t \in \traceset  \ldot \Pi[\pi \mapsto t], i  \models_\traceset  \varphi
\end{align*}
The atomic formula $a_\pi$ holds whenever $a$ holds in the current position $i$ on the trace bound to $\pi$.
Boolean and temporal operators are evaluated as expected, and quantification adds traces from $\traceset$ to $\Pi$.
We say a set of traces $\traceset$ satisfies $\varphi$, written $\traceset \models \varphi$, if $\{\}, 0 \models_\traceset \varphi$, where $\{\}$ denotes the trace assignment with empty domain. 
We say a formula $\varphi$ is \emph{satisfiable} if there exists a $\traceset \neq \emptyset$ with $\traceset \models \varphi$.

\section{Proof of \Cref{theo:correct1}}\label{app:proof}

In this section, we prove \Cref{theo:correct1}. 
Note that, due to our simplified encoding (modeling successor time points using a function opposed to a predicate, cf.~\Cref{rem:old}), our proof is significantly simpler than the proof for the predicate-based encoding in \cite{BeutnerCFHK22}.

\correct*
\begin{proof}
	We show both directions of the equivalence. 
	
	\noindent
	\textit{First Direction:}
	We assume that $\varphi$ is satisfiable an $\traceset \subseteq (2^\ap)^\omega$ is a set of traces with $\traceset \neq \emptyset$ and $\traceset \models \varphi$. 
	We construct the FOL interpretation $\calI_{\traceset}$ for signature $(\calS, \frakF, \frakP)$ from \Cref{sec:safe}.
	For the two sorts $\calS = \{\traceSort, \timeSort\}$ we define
	\begin{align*}
		\traceSort^{\calI_{\traceset}} := \traceset \quad\quad\quad\quad\quad\quad \timeSort^{\calI_{\traceset}} := \nat
	\end{align*}
	That is, we interpret the sort of traces as $\traceset$ and will use the natural numbers for timesteps. 
	For the function symbols we define 
	\begin{align*}
		(i_0)^{\calI_{\traceset}} := 0 \quad\quad\quad\quad (t_0)^{\calI_{\traceset}} := t \quad\quad\quad\quad	\mathit{succ}^{\calI_{\traceset}} := \lambda n : \nat \ldot n + 1
	\end{align*}
	where $t \in \traceset$ is any trace (recall that we assumed $\traceset \neq \emptyset$).
	That is, the initial time point is $0$, the fixed trace constant (used to ensure that the trace-sort is non-empty) is mapped to some arbitrary trace, and the successor function simply increments the natural number. 
	For the predicates $\{P_a\}_{a \in \ap}$ of sort $\traceSort \times \timeSort$ we define 
	\begin{align*}
		(P_a)^{\calI_{\traceset}} := \big\{ (t, i) \in \traceset \times \nat \mid a \in t(i)  \big\} \subseteq \traceSort^{\calI_{\traceset}} \times \timeSort^{\calI_{\traceset}} ,
	\end{align*}
	i.e., $P_a$ holds exactly  on those trace-timepoint pairs $(t, i)$ where $a$ holds at the $i$th position on $t$. 
	Lastly, we interpret the $\{\statePred{q}\}_{q \in Q_\psi}$ predicates of sort $\traceSort^n \times \timeSort$ as follows:
	Consider any $n$-tuple $(t_1, \ldots, t_n) \in \traceset^n$.
	We can construct the combined word $u_{t_1, \ldots, t_n} \in (2^{\ap \times \{\pi_1, \ldots, \pi_n\}})^\omega$ defined by 
	\begin{align*}
		 u_{t_1, \ldots, t_n}(i) := \bigcup_{j= 1}^n \big\{ (a, \pi_j) \mid a \in t_j(i)\big\}.
	\end{align*}
	Now assume that $t_1, \ldots, t_n$ are such that $[\pi_1 \mapsto t_1, \ldots, \pi_n \mapsto t_n] \models_\traceset \psi$.
	By assumption that $\calA_\psi$ is equivalent to $\psi$, this implies that $\calA_\psi$ has an accepting run on $u_{t_1, \ldots, t_n}$.
	We define $\rho_{t_1, \ldots, t_n} \in Q_\psi^\omega$ as any fixed accepting run on $u_{t_1, \ldots, t_n}$. 
	We can then define 
	\begin{align*}
		(\statePred{q})^{\calI_{\traceset}} := \big\{ (t_1, \ldots, t_n, i) \mid [\pi_1 \mapsto t_1, \ldots, \pi_n \mapsto t_n] \models_\traceset \psi \land \rho_{t_1, \ldots, t_n}(i) = q \big\}
	\end{align*}
	That is, we map $\statePred{q}$ to all tuples $(t_1, \ldots, t_n, i)$ where $t_1, \ldots, t_n$ satisfy $\psi$ and the unique fixed run $\rho_{t_1, \ldots, t_n}$ is in state $q$ in the $i$th step. 
	A simple induction shows that $\calI_{\traceset}$ satisfies $\theta_\varphi$. \\
	
	\noindent
	\textit{Second Direction:}
	For the second direction, assume that $\calI$ is a FOL interpretation that satisfies $\theta_\varphi$.
	Let $X = \timeSort^\calI$ be the interpretation of $\timeSort$, and define the sequence $x_0, x_1, \ldots \in X^\omega$ inductively by 
	\begin{align*}
		x_0 &:= (i_0)^\calI \quad\quad\quad\quad\quad\quad x_{i + 1} := \mathit{succ}^{\calI} (x_i).
	\end{align*}
	That is, we use $\calI$'s interpretation of $i_0 \in \frakF$ for the initial time point, and then apply $\calI$'s interpretation of $\mathit{succ} \in \frakF$. 
	Intuitively, $x_i$ is the $i$-fold application of $\mathit{succ}^{\calI}$ on $(i_0)^\calI$, which, in our encoding, we can view as the $i$th timepoint.
	Note that the $x_i$s may not be distinct, i.e., the time might be cyclic. 
	For each $y \in \traceSort^\calI$, we define a trace $t_y \in (2^\ap)^\omega$ pointwise as follows: 
	For each $i \in \nat$, we define the $i$th position of $t_y$ (i.e., $t_y(i) \subseteq 2^\ap$) as follows
	\begin{align*}
		t_y(i) := \big\{a \in \ap \mid (y, x_i) \in (P_a)^\calI \big\}.
	\end{align*}
	That is, $a \in \ap$ holds in the $i$th step on $t_y$ iff $(y, x_i)$ is in $\calI$'s interpretation of $P_a$.
	We define $\traceset := \{t_y \mid y \in \traceSort^\calI\}$.
	Note that $\traceset \neq \emptyset$ as $\traceSort^\calI \neq \emptyset$ (as $(t_0)^\calI \in \traceSort^\calI$).
	A simple induction shows that $\traceset \models \varphi$ as required. 
	\qed
\end{proof}

\section{Proof of \Cref{prop:finite}}\label{app:finite}

\finite*
\begin{proof}
	The ``if'' direction follows directly from the construction in the proof of \Cref{theo:correct1}:
	Assume $\calI$ is a finite interpretation of $\theta_\varphi$.
	In the construction of $\traceset$ (the second direction in the proof of \Cref{theo:correct1}), we add one trace $t_y$ for each $y \in \traceSort^\calI$.
	As $\calI$ is a finite, so is $\traceSort^\calI$ and thus $\traceset$. 
	
	The ``only if'' direction is more challenging. 
	Note that the construction in \Cref{theo:correct1} uses $\nat$ as the interpretation of $\timeSort$ and is thus always infinite. 
	Assume that $\traceset \subseteq (2^\ap)^\omega$ is such that $\traceset \models \varphi$ and $|\traceset| < \infty$. 
	As $\traceset$ is finite, we can assume, w.l.o.g., that $\traceset$ consists of \emph{lasso-shaped} traces, i.e., we can write each $t \in \traceset$ as $t = t^\mathit{stem} (t^\mathit{loop})^\omega$ for some $t^\mathit{stem}, t^\mathit{loop} \in (2^\ap)^+$.
	Note that we do not claim that all finite models are lasso shaped, but rather that if there exists a finite model there also exists a finite model consisting only of lasso-shaped traces. 
	Now consider any $n$-tuple $(t_1, \ldots, t_n) \in \traceset^n$.
	We can construct the combined word $u_{t_1, \ldots, t_n} \in (2^{\ap \times \{\pi_1, \ldots, \pi_n\}})^\omega$ defined by 
	\begin{align*}
		 u_{t_1, \ldots, t_n}(i) := \bigcup_{j= 1}^n \big\{ (a, \pi_j) \mid a \in t_j(i) \big\}.
	\end{align*}
	Now assume that $t_1, \ldots, t_n$ satisfy $\psi$, i.e., $[\pi_1 \mapsto t_1, \ldots, \pi_n \mapsto t_n] \models_\traceset \psi$. 
	By assumption that $\calA_\psi$ is equivalent to $\psi$, this implies that $\calA_\psi$ has an accepting run on $u_{t_1, \ldots, t_n}$.
	Moreover, as $t_1, \ldots, t_n$ are all lasso-shaped, $u_{t_1, \ldots, t_n}$ is lasso-shaped as well, so $\calA_\psi$ has a lasso-shaped run on $u_{t_1, \ldots, t_n}$.
	For any $t_1, \ldots, t_n$ with $[\pi_1 \mapsto t_1, \ldots, \pi_n \mapsto t_n] \models_\traceset \psi$, we can therefore define $\rho_{t_1, \ldots, t_n} \in Q_\psi^\omega$ as some fixed accepting \emph{lasso-shaped} run on $u_{t_1, \ldots, t_n}$. 
	Assume $\rho_{t_1, \ldots, t_n} = \rho^\mathit{stem}_{t_1, \ldots, t_n}( \rho^\mathit{loop}_{t_1, \ldots, t_n})^\omega$ for some $\rho^\mathit{stem}_{t_1, \ldots, t_n}, \rho^\mathit{loop}_{t_1, \ldots, t_n} \in Q_\psi^+$.
	
	We now consider all lassos $\{t^\mathit{stem} (t^\mathit{loop})^\omega \mid t \in \traceset\}$ and $\{\rho^\mathit{stem}_{t_1, \ldots, t_n}( \rho^\mathit{loop}_{t_1, \ldots, t_n})^\omega \mid t_1, \ldots, t_n \in \traceset \land [\pi_1 \mapsto t_1, \ldots, \pi_n \mapsto t_n] \models_\traceset \psi \}$. 
	We can assume, w.l.o.g., that the stem of all those lassos have the same length (we can always extend the stem by moving a prefix of the unrolled lasso to the stem). 
	Let $M_\mathit{stem} \in \nat$ be the common length of all stems. 
	To model the loop-part of all lassos, we need some common multiple.
	We define
	\begin{align*}
		M_\mathit{loop} := \mathit{lcm}\bigg( &\Big\{ | t^\mathit{loop} | \mid t \in \traceset \Big\} \, \cup \\
		&\Big\{ |\rho^\mathit{loop}_{t_1, \ldots, t_n}|  \mid t_1, \ldots, t_n \in \traceset \land [\pi_1 \mapsto t_1, \ldots, \pi_n \mapsto t_n] \models_\traceset \psi \Big\} \bigg),
	\end{align*}	
	where $\mathit{lcm}$ denotes the \emph{least common multiple}.
	That is, $M_\mathit{loop}$ is the least common multiple of the loop lengths of all lassos. 
	The idea is that we interpret the time as the \emph{finite} set $\{0, \ldots, M_\mathit{stem} + M_\mathit{loop}-1\}$.
	For any $i \in \nat$, we can define the finite index $c(i) \in \{0, \ldots, M_\mathit{stem} + M_\mathit{loop}-1\}$ by computing modulo $M_\mathit{loop}$:
	\begin{align*}
		\mathit{c}(i) := \begin{cases}
			\begin{aligned}
				&i \quad &&\text{if } i < M_\mathit{stem}\\
				&M_\mathit{stem} + \big( (i - M_\mathit{stem}) \texttt{ mod } M_\mathit{loop}\big)  &&\text{otherwise}
			\end{aligned}
		\end{cases}
	\end{align*}
	For example, if $M_\mathit{stem} = 2, M_\mathit{loop} = 3$, we have
	\begin{center}
		\def\arraystretch{1.1}
		\setlength\tabcolsep{1.5mm}
		\begin{tabular}{ccccccccccc}
			\toprule
			$i$ & 0 & 1 & 2 & 3  & 4  & 5  & 6 & 7 & 8 & $\cdots$  \\
			\midrule
			$c(i)$ & 0 & 1 & 2 & 3  & 4 & 2 & 3 & 4 & 2 & $\cdots$ \\
			\bottomrule
		\end{tabular}
	\end{center}
	
	\noindent
	Now -- as all stems have the same length ($M_\mathit{stem}$) and $M_\mathit{loop}$ is a common multiple of all loop-lengths -- we can describe all traces and runs within this finite set. 
	Concretely, we have the following: 
	For any $i \in \nat$ and any trace $t \in \traceset$, we have $t(i) = t\big(c(i) \big)$. 
	Likewise, for all $t_1, \ldots, t_n$ with $[\pi_1 \mapsto t_1, \ldots, \pi_n \mapsto t_n] \models_\traceset \psi$, we have $\rho_{t_1, \ldots, t_n}(i) = \rho_{t_1, \ldots, t_n}\big(c(i)\big)$.
	Call this observation \textbf{(1)}.
	
	With this intuition, we can define our interpretation $\calI_{\traceset}$ as follows:
	For the two sorts $\calS = \{\traceSort, \timeSort\}$ we define
	\begin{align*}
		\traceSort^{\calI_{\traceset}} := \traceset \quad\quad\quad\quad\quad\quad \timeSort^{\calI_{\traceset}} := \{0, \ldots, M_\mathit{stem} + M_\mathit{loop}-1\}
	\end{align*}
	For the constants (nullary functions), we define 
	\begin{align*}
		(i_0)^{\calI_{\traceset}} := 0 \quad\quad\quad (t_0)^{\calI_{\traceset}} := t
	\end{align*}
	where $t \in \traceset$ is any trace (recall that we assumed $\traceset \neq \emptyset$).
	For the successor function we define 
	\begin{align*}
		\mathit{succ}^{\calI_{\traceset}} := \lambda n : \{0, \ldots, M_\mathit{stem} + M_\mathit{loop}-1\} \ldot\begin{cases}
			\begin{aligned}
				&n + 1\quad &&\text{if } n < M_\mathit{stem} + M_\mathit{loop}-1\\
				&M_{\mathit{stem}}\quad &&\text{if } n = M_\mathit{stem} + M_\mathit{loop}-1
			\end{aligned}
		\end{cases}
	\end{align*}
	That is, $\mathit{succ}^{\calI_{\traceset}}$ loops through $\{0, \ldots, M_\mathit{stem} + M_\mathit{loop}-1\}$; restarting in $M_\mathit{stem}$. The $i$-fold application of $\mathit{succ}^{\calI_{\traceset}}$ to $(i_0)^{\calI_{\traceset}} = 0$, thus yields $c(i)$. 
	
	For the predicates $\{P_a\}_{a \in \ap}$ we define 
	\begin{align*}
		(P_a)^{\calI_{\traceset}} := \big\{ (t, i) \mid a \in t(i)  \big\}, 
	\end{align*}
	i.e., $P_a$ holds exactly  on those trace-timepoint pairs $(t, i)$ where $a$ holds at the $i$th position on $t$ (for $i \in \{0, \ldots, M_\mathit{stem} + M_\mathit{loop}-1\}$).
	Note that by \textbf{(1)}, the first $M_\mathit{stem} + M_\mathit{loop}-1$ positions suffice to fully recover $t$.
	Lastly, we interpret $\statePred{q}$ as
	\begin{align*}
		(\statePred{q})^{\calI_{\traceset}} := \Big\{ (t_1, \ldots, t_n, i) \mid [\pi_1 \mapsto t_1, \ldots, \pi_n \mapsto t_n] \models_\traceset \psi \land \rho_{t_1, \ldots, t_n}(i) = q \Big\}.
	\end{align*}
	That is, we map $\statePred{q}$ to all tuples $(t_1, \ldots, t_n, i)$ where $t_1, \ldots, t_n$ satisfy $\psi$ and the unique fixed run $\rho_{t_1, \ldots, t_n}$ is in state $q$ in the $i$th step. 
	Again, by \textbf{(1)} this already defines the entire on the cyclic time-step set.
	
	A simple induction shows that $\calI_{\traceset}$ satisfies $\theta_\varphi$.
	Note that $\calI^\traceset$ is finite as required. 
	\qed
\end{proof}

\section{FOL Encoding For Full HyperLTL}\label{app:full}

We modify our encoding from \Cref{sec:safe}, to support full HyperLTL. 

\subsection{Preliminaries}
 
\paragraph{Büchi Automata.}

An nondeterministic Büchi automaton (NBA) over some alphabet $\Sigma$ is a tuple $\calA = (Q, Q_0, \delta, F)$ where $Q$ is a finite set of states, $Q_0 \subseteq Q$ is a set of initial states, $\delta \subseteq Q \times \Sigma \times Q$ is a transition relation, and $F \subseteq Q$ is a set of accepting sets. 
A run of $\calA$ on a word $u \in \Sigma^\omega$ is an infinite sequence $\rho \in Q^\omega$ such that $\rho(0) \in Q_0$ and for every $i \in \nat$, $(\rho(i), u(i), \rho(i+1)) \in \delta$. 
The run $\rho$ is accepting if it visits states in $F$ \emph{infinitely} many times.

\paragraph{First-Order Logic Modulo Theories.}

In pure FOL, all function and predicate symbols are uninterpreted, i.e., an interpretation can fix all function and predicate symbols. 
In our modified encoding, we use FOL modulo a fixed backgorund theory. 
Intuitively, a theory fixes some sorts, functions, and predicates. 
For example, in FOL modulo linear integer aromatic (LIA), we fix a dedicated sort $\intSet \in \calS$, function symbols $+, 0, 1 \in \frakF$, and predicate symbol $\leq \in \frakP$; all with the expected fixed interpretation. 

\begin{figure}[!t]
	\begin{align}
		&\quant_1 x_1 : \traceSort{} \ldots \quant_n x_n  : \traceSort{} \ldot \label{eq:full-quant} \\
		&\quad\quad\bigg(\bigvee_{q \in Q_{\psi, 0}} \statePred{q}(x_1, \ldots, x_n, 0) \bigg) \, \land \label{eq:full-init} \\
		&\quad\quad\Bigg(\forall i : \intSet{}\ldot \bigwedge_{q \in Q_\psi}  \statePred{q}(x_1, \ldots, x_n, i) \to \label{eq:full-premise} \\
		&\quad\quad\quad\quad\bigg(\bigvee_{(q, \sigma, q') \in \delta_\psi} \bigg[ \bigwedge_{a_{\pi_j} \in \sigma} P_a(x_j, i) \land \bigwedge_{a_{\pi_j} \not\in \sigma} \neg P_a(x_j, i) \, \land \label{eq:full-cons1} \\
		&\quad\quad\quad\quad\quad\quad\quad\quad\quad\quad\quad\statePred{q'}\big(x_1, \ldots, x_n,i + 1\big) \bigg] \bigg)\Bigg) \, \land \label{eq:full-cons2} \\
		&\quad\quad\Big(\forall i : \intSet{}\ldot \exists i' : \intSet{}\ldot i < i' \land \bigwedge_{q \in Q_\psi \setminus F_\psi} \neg \statePred{q}(x_1, \ldots, x_n, i')\Big) \label{eq:full-safe}
	\end{align}

	\caption{We depict the FOL module LIA.}\label{fig:enc-full}
\end{figure}

\begin{table}[!t]
	
	\caption{We comapre \eahyper{} \cite{FinkbeinerHS17} (using \texttt{PLTL} \cite{Schwendimann98} as the LTL satisfiability solver), with \tool{} (using \texttt{Vampire} \cite{KovacsV13} and \texttt{Paradox} \cite{claessen2003new} as FOL solvers) on randomly generated $\exists^n\forall^m$ formulas (where $n$ gives the row and $m$ gives the column). For each combination $(n, m)$ we generate 10 random formulas.
		Each cell has the form $p, t$ where $p$ denotes the percentage solved (between $0$ and $1$), and $t$ is the average time needed on the solved instances (in seconds). The timeout is set to 10 seconds.}\label{tab:ea-random}
	\vspace{2mm}
	\centering
	\small
	
	\def\arraystretch{1.1}
	\setlength\tabcolsep{1.2mm}
	
	\scalebox{0.65}{
		\begin{tabular}{l@{\hspace{4mm}}llllllllll}
			\toprule
			&&\textbf{1} & \textbf{2} & \textbf{3} & \textbf{4} & \textbf{5} & \textbf{6} & \textbf{7} & \textbf{8} & \textbf{9} \\
			\midrule
			\multirow{3}{*}{\textbf{1}}&\eahyper{} \cite{FinkbeinerHS17} & 1.00, 0.01 & 1.00, 0.01 & 1.00, 0.01 & 1.00, 0.01 & 1.00, 0.01 & 1.00, 0.01 & 1.00, 0.01 & 1.00, 0.01 & 1.00, 0.01   \\
			& \tool{} (\texttt{Vampire}) & 1.00, 0.19 & 1.00, 0.17 & 0.90, 0.18 & 0.80, 0.18 & 0.90, 0.18 & 1.00, 0.18 & 1.00, 0.19 & 1.00, 0.19 & 1.00, 0.19   \\
			&\tool{} (\texttt{Paradox}) & 1.00, 0.20 & 1.00, 0.20 & 1.00, 0.20 & 1.00, 0.20 & 1.00, 0.21 & 1.00, 0.20 & 1.00, 0.19 & 1.00, 0.20 & 1.00, 0.19  \\
			\midrule
			\multirow{3}{*}{\textbf{2}}&\eahyper{} \cite{FinkbeinerHS17} & 1.00, 0.01 & 1.00, 0.01 & 1.00, 0.01 & 1.00, 0.01 & 1.00, 0.01 & 1.00, 0.01 & 1.00, 0.01 & 1.00, 0.01 & 1.00, 0.02    \\
			& \tool{} (\texttt{Vampire}) & 0.80, 0.18 & 0.90, 0.17 & 0.90, 0.21 & 1.00, 0.20 & 1.00, 0.19 & 1.00, 0.19 & 1.00, 0.18 & 1.00, 0.18 & 0.90, 0.19   \\
			&\tool{} (\texttt{Paradox}) & 1.00, 0.21 & 1.00, 0.21 & 1.00, 0.20 & 1.00, 0.19 & 1.00, 0.19 & 1.00, 0.19 & 1.00, 0.19 & 1.00, 0.19 & 1.00, 0.20  \\
			\midrule
			\multirow{3}{*}{\textbf{3}}&\eahyper{} \cite{FinkbeinerHS17} & 1.00, 0.01 & 1.00, 0.01 & 1.00, 0.01 & 1.00, 0.01 & 1.00, 0.01 & 1.00, 0.02 & 1.00, 0.05 & 1.00, 0.07 & 1.00, 0.34   \\
			& \tool{} (\texttt{Vampire}) & 1.00, 0.18 & 1.00, 0.18 & 0.90, 0.18 & 1.00, 0.19 & 0.90, 0.20 & 1.00, 0.20 & 0.90, 0.19 & 1.00, 0.19 & 0.90, 0.18   \\
			&\tool{} (\texttt{Paradox}) & 1.00, 0.20 & 1.00, 0.20 & 0.90, 0.19 & 1.00, 0.19 & 1.00, 0.19 & 1.00, 0.20 & 0.90, 0.20 & 1.00, 0.20 & 1.00, 0.20  \\
			\midrule
			\multirow{3}{*}{\textbf{4}}&\eahyper{} \cite{FinkbeinerHS17} & 1.00, 0.01 & 1.00, 0.01 & 1.00, 0.01 & 1.00, 0.01 & 1.00, 0.02 & 1.00, 0.05 & 1.00, 0.36 & 0.00, - & 0.00, -   \\
			& \tool{} (\texttt{Vampire}) & 1.00, 0.19 & 1.00, 0.69 & 1.00, 0.18 & 1.00, 0.18 & 1.00, 0.18 & 1.00, 0.18 & 0.90, 0.18 & 1.00, 0.18 & 0.90, 0.18   \\
			&\tool{} (\texttt{Paradox}) & 1.00, 0.19 & 1.00, 0.19 & 1.00, 0.19 & 1.00, 0.19 & 1.00, 0.19 & 1.00, 0.19 & 1.00, 0.20 & 1.00, 0.19 & 1.00, 0.19  \\
			\midrule
			\multirow{3}{*}{\textbf{5}}&\eahyper{} \cite{FinkbeinerHS17} & 1.00, 0.01 & 1.00, 0.01 & 1.00, 0.01 & 1.00, 0.02 & 1.00, 0.04 & 1.00, 0.25 & 0.00, - & 0.00, - & 0.00, -   \\
			& \tool{} (\texttt{Vampire}) & 1.00, 0.18 & 1.00, 0.18 & 0.90, 0.18 & 1.00, 0.17 & 1.00, 0.18 & 0.90, 0.18 & 0.90, 0.18 & 1.00, 0.18 & 0.90, 0.18  \\
			&\tool{} (\texttt{Paradox}) & 1.00, 0.20 & 1.00, 0.19 & 1.00, 0.20 & 1.00, 0.19 & 1.00, 0.19 & 1.00, 0.20 & 1.00, 0.19 & 1.00, 0.19 & 1.00, 0.19  \\
			\midrule
			\multirow{3}{*}{\textbf{6}}&\eahyper{} \cite{FinkbeinerHS17} & 1.00, 0.01 & 1.00, 0.01 & 1.00, 0.01 & 1.00, 0.03 & 1.00, 0.18 & 0.00, - & 0.00, - & 0.00, - & 0.00, -   \\
			& \tool{} (\texttt{Vampire}) & 0.90, 0.17 & 0.90, 0.18 & 1.00, 0.18 & 1.00, 0.19 & 0.90, 0.18 & 0.90, 0.18 & 0.90, 0.19 & 0.70, 0.18 & 0.90, 0.19  \\
			&\tool{} (\texttt{Paradox}) & 1.00, 0.81 & 1.00, 0.20 & 1.00, 0.20 & 1.00, 0.19 & 1.00, 0.19 & 1.00, 0.20 & 1.00, 0.20 & 1.00, 0.20 & 1.00, 0.20  \\
			\midrule
			\multirow{3}{*}{\textbf{7}}&\eahyper{} \cite{FinkbeinerHS17} & 1.00, 0.01 & 1.00, 0.01 & 1.00, 0.02 & 1.00, 0.03 & 0.90, 0.29 & 0.00, - & 0.00, - & 0.00, - & 0.00, -  \\
			& \tool{} (\texttt{Vampire}) & 1.00, 0.18 & 1.00, 0.17 & 0.90, 0.18 & 1.00, 0.18 & 0.90, 0.18 & 1.00, 0.17 & 0.90, 0.17 & 1.00, 0.17 & 1.00, 0.18  \\
			&\tool{} (\texttt{Paradox}) & 1.00, 0.18 & 1.00, 0.17 & 0.90, 0.18 & 1.00, 0.18 & 0.90, 0.18 & 1.00, 0.17 & 0.90, 0.17 & 1.00, 0.17 & 1.00, 0.18 \\
			\midrule
			\multirow{3}{*}{\textbf{8}}&\eahyper{} \cite{FinkbeinerHS17} & 1.00, 0.01 & 1.00, 0.01 & 1.00, 0.02 & 1.00, 0.09 & 0.00, - & 0.00, -& 0.00, - & 0.00, - & 0.00, -  \\
			& \tool{} (\texttt{Vampire}) & 0.80, 0.18 & 1.00, 0.19 & 1.00, 0.18 & 0.90, 0.18 & 1.00, 0.18 & 0.80, 0.18 & 1.00, 0.19 & 1.00, 0.18 & 1.00, 0.18  \\
			&\tool{} (\texttt{Paradox}) & 1.00, 0.20 & 1.00, 0.19 & 1.00, 0.20 & 1.00, 0.19 & 1.00, 0.19 & 1.00, 0.19 & 1.00, 0.19 & 1.00, 0.19 & 1.00, 0.19 \\
			\midrule
			\multirow{3}{*}{\textbf{9}}&\eahyper{} \cite{FinkbeinerHS17} & 1.00, 0.01 & 1.00, 0.01 & 1.00, 0.02 & 1.00, 0.08 & 0.00, - & 0.00,- & 0.00, - & 0.00, - & 0.00, -  \\
			& \tool{} (\texttt{Vampire}) & 1.00, 0.17 & 0.90, 0.18 & 0.90, 0.18 & 1.00, 0.18 & 0.90, 0.18 & 0.90, 0.19 & 1.00, 0.18 & 1.00, 0.19 & 1.00, 0.18  \\
			&\tool{} (\texttt{Paradox}) & 1.00, 0.19 & 1.00, 0.19 & 1.00, 0.20 & 1.00, 0.20 & 1.00, 0.20 & 1.00, 0.20 & 1.00, 0.20 & 1.00, 0.19 & 1.00, 0.38 \\
			\bottomrule
	\end{tabular}}
\end{table}

\subsection{FOL Encoding}

Let $\varphi = \quant_1 \pi_1\ldots \quant_n \pi_n\ldot \psi$ be an \emph{arbitrary} HyperLTL formula, and let $\calA_\psi = (Q_\psi, Q_{\psi, 0}, \delta_\psi, F_\psi)$ be an NBA over alphabet $2^{\ap \times \{\pi_1, \ldots, \pi_n\}}$ that is equivalent to $\psi$. 
We use a FOL signature $(\calS, \frakF, \frakP)$ where \textbf{(1)} $\calS = \{\traceSort{}, \intSet\}$, \textbf{(2)} $\frakF = \{0 : \intSet, 1 : \intSet, + : \intSet \times \intSet \to \intSet, t_0 : \traceSort\}$, and \textbf{(3)} $\frakP$ includes the following: a predicate $< : \intSet \times \intSet$, for each $a \in \ap$ a predicate $P_a :  \traceSort{} \times \intSet$, and for each $q \in Q_\psi$ a predicate $\statePred{q} : (\times_{i=1}^n \traceSort{})\times \intSet$.
Note that $\intSet, 0, 1, +$ and $<$ have a fixed interpretation in LIA.

We define $\theta_\varphi$ as the FOL formula over $(\calS, \frakF, \frakP)$ given in \Cref{fig:enc-full}.
The main idea of our encoding is similar to the one in \Cref{sec:safe}.
The main difference is that \emph{time is now interpreted} (as $\intSet$), so we can directly compute the next timepoint (using $+ 1$) and we can compare two timepoints using $<$. 
We again mimic the quantification in $\varphi$ (\ref{eq:full-quant}) and demand that we start in some initial state at timepoint $0$ (\ref{eq:full-init}).
For each transition, we can directly access the \emph{next} timepoint using $+ 1$ (\ref{eq:full-cons2}).
Finally, using $<$ we can express the acceptance of the resulting run in the more expressive Büchi condition: we require that for every timepoint, some later timepoints only visits accepting states, i.e., does not visit a non-accepting state (\ref{eq:full-safe}).

\begin{remark}
	Note that our encoding ensures that $\statePred{q}(x_1, \ldots, x_n, i)$ encodes that a run of $\calA_\psi$ on the traces $x_1, \ldots, x_n$ \emph{can} be in state $q$ at timepoint $i$. 
	In particular, the same combination of $n$ traces and timepoint can be in multiple automaton states at the same time. 
	In our encoding it therefore does not suffice to requires that ``infinitely often we can be in accepting state'' but rather we require that ``infinitely often we cannot be in a non-accepting state, i.e., all possible automaton states are accepting''.
	\demo
\end{remark}

Similar to \Cref{theo:correct1}, we can easily show that our encoding is correct.

\begin{theorem}\label{theo:correct2}
	A HyperLTL formula $\varphi$ is satisfiable if and only if there exists a FOL interpretation $\calI$ (modulo LIA) that satisfies $\theta_\varphi$.
\end{theorem}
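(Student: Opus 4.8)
The plan is to mirror the two-direction argument of \Cref{theo:correct1}, adapting it to the interpreted-time setting and to the Büchi acceptance condition of \Cref{fig:enc-full}. Two differences drive the changes: (i) the sort $\intSet$ is now rigidly fixed by LIA to the integers with the standard $0$, $1$, $+$, and $<$, so in neither direction do I have freedom in choosing the time domain; and (ii) the safety test ``$\calA_\psi$ never reaches a bad state'' is replaced by the encoded Büchi condition \ref{eq:full-safe}, which is where the proof genuinely departs from the safety case and where I expect the real work to lie.

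For the forward direction I assume $\traceset \neq \emptyset$ with $\traceset \models \varphi$ and construct $\calI_\traceset$ by setting $\traceSort^{\calI_\traceset} := \traceset$, interpreting $\intSet$ and $0, 1, +, <$ by their standard LIA meaning, fixing $(t_0)^{\calI_\traceset}$ to an arbitrary trace, and putting $(P_a)^{\calI_\traceset} := \{(t, i) \mid i \geq 0 \land a \in t(i)\}$. For every tuple $(t_1, \ldots, t_n)$ with $[\pi_1 \mapsto t_1, \ldots, \pi_n \mapsto t_n] \models_\traceset \psi$ I fix one accepting run $\rho_{t_1, \ldots, t_n}$ of $\calA_\psi$ on the combined word $u_{t_1, \ldots, t_n}$ and let $\statePred{q}$ hold at $(t_1, \ldots, t_n, i)$ exactly when $i \geq 0$ and $\rho_{t_1, \ldots, t_n}(i) = q$; for every other tuple (and every negative index) I leave all $\statePred{q}$ false. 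Then \ref{eq:full-init} holds via $\rho(0) \in Q_{\psi, 0}$, \ref{eq:full-premise}--\ref{eq:full-cons2} via $\rho$ respecting $\delta_\psi$ (and vacuously at negative indices), and \ref{eq:full-safe} because a single run marks exactly one state per timepoint and meets $F_\psi$ infinitely often, so arbitrarily large $i'$ carry no non-accepting marked state. Since the body is thus true precisely on $\psi$-satisfying tuples, a routine induction over the prefix \ref{eq:full-quant} yields $\calI_\traceset \models \theta_\varphi$.

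For the converse I take any $\calI \models \theta_\varphi$, read off $t_y(i) := \{a \mid (y, i) \in (P_a)^\calI\}$ for $y \in \traceSort^\calI$, $i \in \nat$, and set $\traceset := \{t_y \mid y \in \traceSort^\calI\}$, non-empty because $(t_0)^\calI \in \traceSort^\calI$. Here the delicate point is that \ref{eq:full-premise}--\ref{eq:full-cons2} only guarantee that every \emph{marked} state at time $i$ has \emph{some} marked $\delta_\psi$-successor at $i+1$ (under the unique label forced by the $P_a^\calI(y_j, i)$ values), so the marking need not be a single run. Given a tuple $(y_1, \ldots, y_n)$ for which the body holds, I let $R_i \subseteq Q_\psi$ be the set of states marked at time $i$; by \ref{eq:full-init} $R_0$ meets $Q_{\psi, 0}$, and forward-closure makes every $R_i$ nonempty. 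I then build the tree of valid run-prefixes $q_0 q_1 \cdots q_i$ with $q_0 \in R_0 \cap Q_{\psi, 0}$, each $q_k \in R_k$, and $(q_k, \sigma_k, q_{k+1}) \in \delta_\psi$; this tree has a node at every depth and branches at most $|Q_\psi|$-fold, so K\"onig's lemma supplies an infinite path, i.e.\ a genuine run of $\calA_\psi$ starting in $Q_{\psi, 0}$.

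The main obstacle is then showing this run is \emph{accepting}, and this is exactly what \ref{eq:full-safe} buys: instantiating its outer $\forall i$ at $0, 1, 2, \ldots$ produces arbitrarily large $i'$ at which no non-accepting state is marked, i.e.\ $R_{i'} \subseteq F_\psi$, so the path's state $\rho(i') \in R_{i'}$ lies in $F_\psi$. Hence the run meets $F_\psi$ infinitely often, $\calA_\psi$ accepts $u_{y_1, \ldots, y_n}$, and $[\pi_1 \mapsto t_{y_1}, \ldots, \pi_n \mapsto t_{y_n}] \models_\traceset \psi$. Because the FOL prefix \ref{eq:full-quant} matches the HyperLTL prefix and $y \mapsto t_y$ is onto $\traceset$, the same quantifier induction as before lifts this base case to $\traceset \models \varphi$, closing the equivalence.
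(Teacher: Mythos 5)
Your proposal is correct and follows exactly the route the paper intends: the paper gives no explicit proof of this theorem (it only remarks that the argument is ``similar to \Cref{theo:correct1}''), and your two directions are the natural adaptation of that proof to interpreted time and B\"uchi acceptance. The one genuinely delicate point --- that the $\statePred{q}$ marking need not be a single run, so one must extract a run from the forward-closed sets $R_i$ (K\"onig's lemma, or just greedy extension since every marked state has a marked successor) and then use \ref{eq:full-safe} in the form $R_{i'} \subseteq F_\psi$ for arbitrarily large $i'$ --- is precisely the subtlety the paper flags in its remark about requiring ``all possible automaton states are accepting'' infinitely often, and you handle it correctly.
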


\section{Additional Material for \Cref{sec:ef}}\label{app:eahyper}

In \Cref{tab:ea-random}, we compare \eahyper{} and \tool{} on randomly generated $\exists^n\forall^m$ formulas. 
As in \Cref{sec:ef}, we observe that for large numbers of quantifiers \eahyper{} times out, whereas \tool{} can solve almost all examples.

\fi

\end{document}